\newtheorem{theorem}{Theorem}[section]
\newtheorem{proposition}[theorem]{Proposition}
\newtheorem{corollary}[theorem]{Corollary}
\theoremstyle{remark}
\newtheorem{definition}[theorem]{Definition}
\newtheorem{example}[theorem]{Example}
\newcommand{\R}{{\mathbb R}}
\newcommand{\pfc}{{\mathcal F_c}}
\newcommand{\pkc}{{\mathcal K_c}}
\newcommand{\N}{{\mathbb N}}
\begin{document}
	
	\begin{frontmatter}
		\title{Simplicial depths for fuzzy random variables}
		%\title{A sample article title with some additional note\thanksref{t1}}
		\runtitle{Simplicial Fuzzy Depth}
		%\thankstext{T1}{A sample additional note to the title.}
		
		\begin{aug}
			%%%%%%%%%%%%%%%%%%%%%%%%%%%%%%%%%%%%%%%%%%%%%%
			%%Only one address is permitted per author. %%
			%%Only division, organization and e-mail is %%
			%%included in the address.                  %%
			%%Additional information can be included in %%
			%%the Acknowledgments section if necessary. %%
			%%%%%%%%%%%%%%%%%%%%%%%%%%%%%%%%%%%%%%%%%%%%%%
			\author[A]{\fnms{LUIS} \snm{GONZ\'ALEZ-DE LA FUENTE}\ead[label=e1]{gdelafuentel@unican.es}},
			\author[A]{\fnms{ALICIA} \snm{NIETO-REYES}\ead[label=e2]{alicia.nieto@unican.es}}
			\and
			\author[B]{\fnms{PEDRO} \snm{TER\'AN}\ead[label=e3]{teranpedro@uniovi.es}}
			%%%%%%%%%%%%%%%%%%%%%%%%%%%%%%%%%%%%%%%%%%%%%%
			%% Addresses                                %%
			%%%%%%%%%%%%%%%%%%%%%%%%%%%%%%%%%%%%%%%%%%%%%%
			\address[A]{Departamento de Matem\'aticas, Estad\'istica y Computaci\'on,
				Universidad de Cantabria (Spain),
				%\printead{e1,e2}
			}
			
			\address[B]{Departamento de Estad\'istica e Investigaci\'on Operativa y Did\'actica de las Matem\'aticas,
				Universidad de Oviedo (Spain),
				%\printead{e3}
			}
		\end{aug}
		
		\begin{abstract}
			The recently defined concept of a statistical depth function for fuzzy sets provides a theoretical framework for ordering fuzzy sets with respect to the distribution of a fuzzy random
			variable. One of the most used and studied statistical depth function for multivariate data is simplicial depth, based on multivariate simplices. We introduce a notion of pseudosimplices generated by fuzzy sets and propose
			three plausible generalizations of simplicial depth to fuzzy sets.  Their theoretical properties are analyzed and the behavior of the proposals illustrated through a study of both synthetic and real data.			\end{abstract}
		
		\if0
		...........
		\begin{keyword}[class=MSC2010]
			\kwd[Primary ]{94D05}
			\kwd{62G99}
			\kwd[; secondary ]{62G30}
		\end{keyword}
		..........
		\fi
		
		\begin{keyword}
			\kwd{Fuzzy data}
			\kwd{Fuzzy random variable}
			\kwd{Nonparametric statistics}
			\kwd{Statistical depth}
			\kwd{Projection depth}
			\kwd{$L^{r}$-type depth}
		\end{keyword}
		
	\end{frontmatter}
	%%%%%%%%%%%%%%%%%%%%%%%%%%%%%%%%%%%%%%%%%%%%%%
	%% Please use \tableofcontents for articles %%
	%% with 50 pages and more                   %%
	%%%%%%%%%%%%%%%%%%%%%%%%%%%%%%%%%%%%%%%%%%%%%%
	%\tableofcontents
	
	\section{Introduction}
		In the general framework  of fuzzy data, the data consists of classes of objects with a continuum of grades of membership \citep{zadehfuzzysets}. They are generally represented as functions from $\mathbb{R}^p$ to $[0,1],$ as
	opposed to multivariate data which are points in $\mathbb{R}^p.$ On the other hand, statistical depth functions are a quantification of
	the intuitive notion that the median is the point that is most `in the middle'. They do so
	by providing a center-outward ordering of the points in a space with respect to a probability distribution or data set. While this task is trivial in the real line, in the sense that moving outward is just going towards
	$-\infty$ or $\infty$, it becomes harder for multivariate data (and even more so for more complex types of data) as no natural total order is present.
	
	To understand some of the challenges involved, consider the first idea one might have, which is to apply the median, coordinate-wise, to obtain a multivariate median in $\R^p$. 	The
	coordinate-wise median may lie outside the convex hull of the data, against the idea that the median should be as much `in the middle' of the data as possible. Moreover, by changing the coordinate system (which does not
	affect the data themselves, only how we represent them) the coordinate-wise median of the data set can be modified. Even in simple cases, like the vertices of an equilateral triangle and its center of mass, it fails to provide the intuitive solution that the innermost point is the center of mass \citep{Rafa}.

	The notion of a statistical depth function (giving each point in a space a depth value with respect to a sample or a distribution on the space, as a measure of its centrality) opens an avenue for extending rank-based and
	quantile-based statistical procedures from the real line to more complex spaces. Tukey \citep{tukey} first introduced depth for multivariate data. Some pre-existent notions in multivariate analysis can be expressed in the
	language of depth. For instance, Mahalanobis distance gives rise to Mahalanobis depth; other examples are convex hull peeling depth \citep{Barnet1976} and Oja depth \citep{Oja}. Liu \citep{LiuSimplicial} introduced simplicial depth, which is one of the best known and most popular depth functions. She proved a number of nice properties which then inspired Zuo and Serfling's abstract definition of  statistical depth function,  constituted by a list of desirable properties \citep{ZuoSerfling}. In intuitive terms, these are as follows.
	\begin{itemize}
		\item[(M1)]{\em Affine invariance.} A change of coordinates should not affect the depth values.
		\item[(M2)]{\em Maximality at the center of symmetry.} If a distribution is symmetric, the deepest point should be the center of symmetry.
		\item[(M3)]{\em Monotonicity from the deepest point.} Depth values should decrease along any ray that departs from a deepest point.
		\item[(M4)]{\em Vanishing at infinity.} The depth value of $x$ should go to $0$ as its norm goes to infinity. 
	\end{itemize}
	It should be underlined that these are not clear-cut axioms. Failing to satisfy some property or other, or doing so only under some conditions, is not considered enough for a function to be excluded from being a depth
	function.
	
	Today, the number of depth functions runs in the dozens and this is a broad and active topic in non-parametric statistics. With the rise of functional data analysis and the apparition of several adaptations of multivariate
	depth notions to the functional setting, Nieto-Reyes and Battey proposed a list of desirable properties for depth functions in function (metric) spaces \citep{NietoBattey},  and an instance of depth satisfying all those
	properties in \citep{jmva}. This instance was later applied to a real data analysis in \citep{math1}. The connections between depth functions and fuzzy sets were noted by Ter\'an \citep{SMPS10,IJAR}, who showed that some
	depth functions can be rigorously interpreted as fuzzy sets and {\it vice versa}. In \citep{primerarticulo} we proposed two definitions of statistical depth for fuzzy data; although fuzzy sets are functions, these definitions list desirable properties tailored to fuzzy sets. We also generalized Tukey depth as a first example of depth for fuzzy data, and studied its properties. Sinova \citep{Sinova} also considered depth for fuzzy
	data and defined depth-trimmed means.

	It is important to show that more of the most relevant examples of depth can be adapted to the fuzzy setting. Firstly, to justify the viability of the notions of depth for fuzzy data. Secondly, to create a library of depth
	functions with guaranteed good theoretical properties in order to have them applied in practice. And thirdly, to test the abstract definitions in \citep{primerarticulo} and understand whether they are fine as they stand or
	might need to be adjusted.
	
	In this paper, we study the problem of adapting Liu's simplicial depth to the fuzzy setting. As mentioned above, it is one of the best known and most used depth functions for multivariate data. For instance, Liu et al.
	\citep{Parelius} developed  techniques to study multivariate distributional characteristics using simplicial depth, and other depth functions. The multivariate definition of simplicial depth assigns to each point $x\in\R^p$
	a depth value being the probability that $x$ lies in the convex hull of $p+1$ independent observations. Provided the distribution is continuous, with probability $1$ those observations define a $p$-dimensional simplex
	(a triangle in $\R^2$, a tetrahedron in $\R^3$, and so on) with non-empty interior, which may contain $x$ or not. If $x$ is very outlying in the distribution, the probability that the simplex will contain $x$ is very small.
	Thus $x$ is deep insofar as, loosely speaking, it is likely that the data points in a small sample `capture' $x$ among them.
	
	When extending this notion to functional data, L\'opez-Pintado and Romo \citep{LopezRomoBand} already realized that using the convex hull to determine which functions are `among' other functions is naive. We face similar
	problems in the fuzzy case. In the end, the convex hull of finitely many points is a finite-dimensional set, so in any infinite-dimensional space the vast majority of the elements in the space will be excluded from it. This
	creates a propensity to assign zero depth which will require an adaptation in line with that in \citep{LopezRomoBand}.
	
	Another obstacle is that some multivariate definitions do not transfer immediately to the fuzzy setting. For instance, Tukey depth is based on the notion of a halfspace but spaces of fuzzy sets, not being linear spaces,
	cannot be `halved' by hyperplanes so a workaround needed to be devised in \cite{primerarticulo}. In this case, simplicial depth rests on the notion of a simplex in $\R^p$, which, as will be discussed, also needs a
	workaround. That results in a plurality of ways to extend simplicial depth.

	The paper is organized as follows. Section \ref{prelim} contains the notation and background on fuzzy sets and statistical depth required for a comprehensive understanding of the next sections. An operative adaptation of
	simplices to spaces of sets and fuzzy sets is presented in Section \ref{pseudo}. The definitions of the proposed variants of simplicial depth are in Section \ref{simplicial}. Their status with respect to the desirable
	properties in the definitions of depth for fuzzy data \citep{primerarticulo} is studied in Section \ref{properties}, assuming that the distribution is `continuous' in a certain sense. Examples with real and simulated data
	are worked out in Section \ref{datasimulation}, while a discussion is presented in Section \ref{discussion}. All proofs are deferred to Section \ref{proofs}.
	
	\section{Notation and preliminaries}
	\label{prelim}
	
	The following notation is used throughout. A function  $A: \mathbb{R}^{p}\rightarrow [0,1]$ is a {\em fuzzy set} on $\mathbb{R}^{p}$ (or a fuzzy subset of $\mathbb{R}^{p}$). Let $\mathcal{F}_{c}(\mathbb{R}^{p})$ denote the
	class of all fuzzy sets $A$ on $\mathbb{R}^{p}$ such that the $\alpha$-level of $A,$ given by
	$$A_{\alpha}= \{x\in\mathbb{R}^{p}: A(x)\geq\alpha \}$$
	if $\alpha\in(0,1]$ and the closed support of $A$ if $\alpha=0$,
	is compact and convex for every $\alpha\in [0,1]$.  We will freely write `fuzzy set' to mean an element of $\mathcal{F}_{c}(\mathbb{R}^{p})$.
	
	Let $\mathcal{K}_{c}(\mathbb{R}^{p})$ be the class of non-empty  compact and convex subsets of $\mathbb{R}^{p}$. Any set $K\in\mathcal{K}_{c}(\mathbb{R}^{p})$ can be identified with a fuzzy set, its {\em indicator function}
	$\text{I}_{K} : \mathbb{R}^{p}\rightarrow\mathbb{R}$ where $\text{I}_{K}(x) = 1$ if $x\in K$ and $\text{I}_{K}(x) = 0$ otherwise.

	The unit sphere of $\mathbb{R}^{p}$ is $\mathbb{S}^{p-1} = \{x\in\mathbb{R}^{p}: \|x\|\leq 1 \}$, with $\|.\|$ denoting the Euclidean norm on $\mathbb{R}^{p}$. The symbol $=^{\mathcal{L}}$ denotes equality in distribution of
	random variables and $\mathcal{M}_{p\times p}(\mathbb{R})$ is the set of all $p\times p$ real matrices.
	
	The {\em support function} of $A\in\mathcal{F}_{c}(\mathbb{R}^{p})$ is the mapping $s_{A}: \mathbb{S}^{p-1}\times [0,1]\rightarrow\mathbb{R}$ such that $s_{A}(u,\alpha) := \sup_{v\in A_{\alpha}}\langle u,v\rangle, $
	for every $u\in\mathbb{S}^{p-1}$ and $\alpha\in [0,1]$, where $\langle \cdot ,\cdot\rangle$ denotes the usual inner product in $\mathbb{R}^{p}$.
	By \cite[Proposition 7.2]{primerarticulo}, 	\begin{equation}\label{pa} 		s_{M\cdot A}(u,\alpha) = \|M^{T}\cdot u\|\cdot s_{A}\left(\cfrac{1}{\|M^{T}\cdot u\|}\cdot M^{T}\cdot u,\alpha\right) 	\end{equation} for any
	$A\in\mathcal{F}_{c}(\mathbb{R}^{p}),$  $M\in\mathcal{M}_{p\times p}(\mathbb{R})$ being non-singular, $u\in\mathbb{S}^{p-1}$ and $\alpha\in [0,1].$
	
	In $\mathcal{F}_{c}(\mathbb{R}),$  the subclass of \textit{trapezoidal fuzzy sets} \cite[Section 10.7]{Klir} is used very often. Four values $a, b,c,d \in\mathbb{R}$ with $a\le b\leq c \le d$ determine the trapezoidal fuzzy
	set
	
	\begin{equation*}\label{trapezoidal}
		\mbox{Tra}(a,b,c,d)(x) := \left\{ \begin{array}{lcr}
			\cfrac{x - a}{b-a},  & \text{ if } &a< x<b, \\
			\\1,& \text{ if } &b\leq x\leq c, \\
			\\ \cfrac{x - d}{c-d},  & \mbox{ } \text{ if } & c< x< d, \\
			\\ 0, && \text{otherwise.}
		\end{array}
		\right.
	\end{equation*}
	
	\subsection{Arithmetics and Zadeh's extension principle}\label{Aarith}
	Let $A,B\in\mathcal{F}_{c}(\mathbb{R}^{p})$ and $\gamma\in\mathbb{R}$. The formulae 		\begin{equation}\nonumber 			(A + B)(t) := \sup_{x,y\in\mathbb{R}^{p}: x + y = t} \min\{A(x), B(y) \}, \text{ and} 		
	\end{equation} 		\begin{equation}\nonumber 			(\gamma\cdot A)(t) := \sup_{x\in\mathbb{R}^{p} : t = \gamma\cdot x} A(y) = \left\{ 			\begin{array}{lrr} 				A\left(\frac{t}{\gamma}\right),     &
			\mbox{ } \mbox{ if } & \gamma\neq 0 \\ 				\\ 				I_{\{0\}}(t), & \mbox{ } \mbox{ if } & \gamma = 0 			\end{array} 			\right. 		\end{equation} valid for arbitrary $t\in\R^p$, define an
	addition and a product by scalars in $\pfc(\R^p)$.
	
	Given $A,B\in\mathcal{F}_{c}(\mathbb{R}^{p})$, $\gamma\in [0,\infty)$, $u\in\mathbb{S}^{p-1}$ and $\alpha\in [0,1],$  a useful relation that makes use of these operations is
	\begin{equation}\label{soportesuma}
		s_{A+\gamma\cdot B}(u,\alpha) = s_{A}(u,\alpha) + \gamma\cdot s_{B}(u,\alpha).
	\end{equation}
	
	Zadeh's extension principle  \citep{zadehextension} allows a continuous, crisp, function  $f: \mathbb{R}^{p}\rightarrow\mathbb{R}^{p}$ to act on a fuzzy set $A\in\mathcal{F}_{c}(\mathbb{R}^{p}),$ obtaining
	$f(A)\in\mathcal{F}_{c}(\mathbb{R}^{p})$  with $f(A)(t) := \sup\{A(y) : y\in\mathbb{R}^{p}, f(y) = t \}$ for all $t\in\mathbb{R}^{p}$.

	\subsection{Metrics in the fuzzy setting}
	\label{MFS}
	
	We will make use of different metrics in $\mathcal{F}_{c}(\mathbb{R}^{p})$. For any fuzzy sets $A,B\in\mathcal{F}_{c}(\mathbb{R}^{p})$, let
	\begin{equation}\nonumber
		d_{r}(A,B) := \left\{ \begin{array}{lrr} 	\left(\int_{[0,1]} \left( d_{\mathcal{H}} (A_{\alpha},B_{\alpha}) \right)^{r} \dif\nu(\alpha)\right)^{1/r} & \mbox{ } \mbox{  if } & r\in [1,\infty),\\ 	\\ \sup_{\alpha\in
				[0,1]} d_{\mathcal{H}}  (A_{\alpha},B_{\alpha}) &  \mbox{ } \mbox{  if } & r = \infty,
		\end{array}
		\right.
	\end{equation}
	where $$d_{\mathcal{H}} (S,T) := \max\left\{\sup_{s\in S}\inf_{t\in T} \parallel s-t\parallel, \sup_{t\in T}\inf_{s\in S}\parallel s-t\parallel\right\}$$ defines the {\em Hausdorff metric} and $\nu$ denotes the Lebesgue
	measure in $[0,1]$. While $(\mathcal{F}_{c}(\mathbb{R}^{p}), d_{r})$ is a non-complete and separable metric space for any $r\in[1,\infty)$, the metric space
	$(\mathcal{F}_{c}(\mathbb{R}^{p}), d_{\infty})$ is non-separable and complete  \citep{diamondkloden}.
	According to \citep{diamondkloden}, it is also possible to consider $L^{r}$-type metrics for any
	$A,B\in\mathcal{F}_{c}(\mathbb{R}^{p}),$
	\begin{equation}\nonumber
		\rho_{r}(A,B) := \left(\int_{\mathbb{S}^{p-1}}\int_{[0,1]}|s_{A}(u,\alpha)-s_{B}(u,\alpha)|^{r}  \dif\nu(\alpha) \dif\mathcal{V}_{p}(u)\right)^{1/r}
	\end{equation}
	where $\mathcal{V}_{p}$ denotes the normalized Haar measure in $\mathbb{S}^{p-1}$. The metrics $d_r$ and $\rho_r$ (for the same value of $r$) are equivalent.

	\subsection{Fuzzy random variables}\label{Frv}
	
	There exists different definitions of fuzzy random variables in the literature. Here we consider the Puri's and Ralescu's approach (see \citep{PuriRalescu}). Let $(\Omega,\mathcal{A},\mathbb{P})$ be a probability space. A
	\emph{random compact set} \citep{Mol} is a function $\Gamma:\Omega\rightarrow\mathcal{K}_{c}(\mathbb{R}^{p})$ such that $\{\omega\in\Omega : \Gamma(\omega)\cap K\neq\emptyset \}\in\mathcal{A}$ for each
	$K\in\mathcal{K}_{c}(\mathbb{R}^{p})$. 	A \emph{fuzzy random variable} \citep{PuriRalescu} is a function $\mathcal{X}:\Omega\rightarrow\mathcal{F}_{c}(\mathbb{R}^{p})$ such that 	$\mathcal{X}_{\alpha}(\omega)$ is a random
	compact set for all $\alpha\in[0,1]$, 	where the $\alpha$-level mapping $\mathcal{X}_{\alpha}:\Omega\rightarrow\mathcal{K}_{c}(\mathbb{R}^{p})$ is defined by 	$\mathcal{X}_{\alpha}(\omega) := \{x\in\mathbb{R}^{p}:
	\mathcal{X}(\omega)(x)\geq\alpha  \}$ for any $\omega\in\Omega$. 	
	
	It is not explicit in this definition that a fuzzy random variable is a measurable function in the ordinary sense. But clearly, $\mathcal{X}$ is a fuzzy
	random variable if and only if it is measurable when $\mathcal{F}_{c}(\mathbb{R}^{p})$ is endowed with the $\sigma$-algebra generated by the $\alpha$-cut mappings $L_\alpha:A\in\mathcal{F}_{c}(\mathbb{R}^{p})\mapsto
	A_\alpha\in\mathcal{K}_{c}(\mathbb{R}^{p})$, namely the smallest $\sigma$-algebra which makes each $L_\alpha$ measurable. As shown by Kr\"atschmer \citep{Kra01}, that is the Borel $\sigma$-algebra generated by any of the
	metrics $d_r$ or $\rho_r$ for $r\in[1,\infty)$.  Given a fuzzy random variable, $\mathcal{X}: \Omega\rightarrow\mathcal{F}_{c}(\mathbb{R}^{p})$,   the support function of $\mathcal{X}$ is defined as the function
	$s_{\mathcal{X}} : \mathbb{S}^{p-1}\times [0,1]\times\Omega\rightarrow\mathbb{R}$ with $ 	s_{\mathcal{X}}(u,\alpha,\omega) := s_{\mathcal{X}(\omega)}(u,\alpha), $ for all $u\in\mathbb{S}^{p-1}, \alpha\in [0,1]$ and
	$\omega\in\Omega$. Throughout the paper, $(\Omega,\mathcal{A},\mathbb{P})$ denotes the probabilistic space associated with the fuzzy random variable $\mathcal{X}$.  Let $L^{0}[\mathcal{F}_{c}(\mathbb{R}^{p})]$ denote the
	class of all fuzzy random variables on the measurable space $(\Omega,\mathcal{A})$ and $C^{0}[\mathcal{F}_{c}(\mathbb{R}^{p})]\subseteq L^{0}[\mathcal{F}_{c}(\mathbb{R}^{p})]$ the class of all fuzzy random variables
	$\mathcal{X}$ such that $s_{\mathcal{X}}(u,\alpha)$ is a continuous real random variable for each $(u,\alpha)\in\mathbb{S}^{p-1}\times [0,1]$.

	\subsection{Fuzzy symmetry and depth. Semilinear and geometric depth.}\label{notiondepth}

	Let $\mathcal{X}:\Omega\rightarrow\mathcal{F}_{c}(\mathbb{R}^{p})$ be a fuzzy random variable and $A\in\mathcal{F}_{c}(\mathbb{R}^{p})$ a fuzzy set. In \citep{primerarticulo}, we proposed the {\em F-symmetry} notion for fuzzy random variables: 
	$\mathcal{X}$ is \emph{$F$-symmetric} with respect to $A$ if, for all $(u,\alpha)\in\mathbb{S}^{p-1}\times[0,1],$
	\begin{equation}\nonumber 		s_{A}(u,\alpha) - s_{\mathcal{X}}(u,\alpha) =^{\mathcal{L}} s_{\mathcal{X}}(u,\alpha) - s_{A}(u,\alpha). 	\end{equation} It can be checked that the indicator function $I_{\{X\}}$ of a
	$p$-dimensional random vector $X$ is F-symmetric if and only if $X$ is a symmetrically distributed random vector.
	
	Let $\text{Med}$ be the (possibly multivalued) median operator on real random variables. It is also proved in \cite{primerarticulo} that, for all $u\in\mathbb{S}^{p-1}$ and $\alpha\in [0,1],$
	\begin{eqnarray}\label{Amedian}
		s_{A}(u,\alpha) \in \text{Med}(s_{\mathcal{X}}(u,\alpha)),  \mbox{ if }\mathcal{X}  \mbox{ is } F\mbox{-symmetric with respect to }A. 		\end{eqnarray} In the sequel, given a real sample $x_{1}, \ldots, x_{n},$
	$\text{Med}(x_{1}, \ldots, x_{n})$ denotes its median.

	Let $\mathcal H\subseteq L^0[\pfc(\R^p)],$  $\mathcal{J}\subseteq\mathcal{F}_{c}(\mathbb{R}^{p}),$ and $d:\mathcal{F}_{c}(\mathbb{R}^{p})\times\mathcal{F}_{c}(\mathbb{R}^{p})\rightarrow[0,\infty)$ a metric. The following
	properties are considered in \citep{primerarticulo}. In them, $A$ denotes an element of $\mathcal{J}$ such that  $D(A;\mathcal{X}) = \sup \{D(B;\mathcal{X}) : B\in\mathcal{J}\}$, i.e., a fuzzy set of maximal depth in the
	distribution of $\mathcal X$.
	
	\begin{enumerate}
		\item[{\bf P1.}] $D(M\cdot C + B; M\cdot\mathcal{X} + B) = D(C;\mathcal{X})$  for any non-sigular matrix $M\in\mathcal{M}_{p\times p}(\mathbb{R}),$ any $B,C\in\mathcal{J}$ and any $\mathcal{X}\in{\mathcal{H}}.$ 	
		\item[{\bf P2.}] For (some notion of symmetry and) any symmetric fuzzy random variable  $\mathcal{X}\in{\mathcal H}$, 	$D(U;\mathcal{X}) = \sup_{B\in\mathcal{F}_{c}(\mathbb{R}^{p})} D(B;\mathcal{X}),$ where
		$U\in\mathcal{J}$ is a center of symmetry of $\mathcal{X}.$ 	\item[{\bf P3a.}] 		$ 		D(A;\mathcal{X})\geq D((1-\lambda)\cdot A + \lambda\cdot B;\mathcal{X})\geq D(B;\mathcal{X}) 		$ 		for all
		$\lambda\in[0,1]$ and all $B\in\mathcal{F}_{c}(\mathbb{R}^{p})$. 	\item[{\bf P3b.}] $ 		D(A;\mathcal{X})\geq D(B;\mathcal{X})\geq D(C;\mathcal{X})$		for all $B,C\in\mathcal{J}$ satisfying $d(A,C) = d(A,B) +
		d(B,C)$. 	\item[{\bf P4a.}] $ 		\lim_{\lambda\rightarrow\infty} D(A + \lambda\cdot B;\mathcal{X}) = 0 		$ 	for all $B\in\mathcal{J}\setminus\{\text{I}_{\{0\}}\}$. 	\item[{\bf P4b.}] $ 		
		\lim_{n\rightarrow\infty} D(A_{n};\mathcal{X}) = 0$		for every sequence of fuzzy sets $\{ A_{n}\}_{n}$ such that the $\lim_{n\rightarrow\infty} d(A_{n},A) = \infty$. 	\end{enumerate} In Property {\bf P2}, F-symmetry
	will be considered. Another notion of symmetry is also proposed in \cite{primerarticulo}. According to \citep{primerarticulo}, a mapping  $D(\cdot;\cdot):\mathcal{J}\times 	{\mathcal H} 	\rightarrow[0,\infty)$  is
	a \emph{semilinear depth function} if it satisfies P1, P2, P3a and P4a for each fuzzy random variable $\mathcal{X}\in\mathcal H.$ It  is  a \emph{geometric depth function} with respect to $d$  if it satisfies 	 P1, P2,
	P3b and P4b for each fuzzy random variable $\mathcal{X}\in\mathcal H.$ Notice that semilinear depth only depends on the arithmetics of $\pfc(\R^p)$ while geometric depth depends on the choice of a specific metric.
	
	\section{Pseudosimplices in $\pfc(\R^d)$}
	\label{pseudo}

	One of the most well-known statistical depth functions for multivariate data is simplicial depth \citep{LiuSimplicial}. Simplicial depth is an instance of what Zuo and Serfling \citep{ZuoSerfling}  called `Type A depth',
	i.e., the depth of a point is the probability that it lies in a certain random set constructed from independent and identically distributed copies of the random variable. As such, it is the coverage function of
	a random set and a connection to fuzzy sets is immediate \citep{GooNgu}. Further examples of Type A depth functions are majority depth \citep{majority, Parelius}, convex hull peeling depth \cite{Barnet1976}, spherical depth
	\citep{spherical}, and lens depth \citep{lens}.
	
	The simplicial depth of $x\in\mathbb{R}^{p}$ with respect to a probability distribution $\mathbb{P}$ on $\mathbb{R}^{p}$ is defined to be
	\begin{equation}\label{simplicialmult}
		SD(x;\mathbb{P}) := \mathbb{P}(x\in S[X_{1},\ldots,X_{p+1}]),
	\end{equation}
	where $X_{1},\ldots,X_{p+1}$ are independent and identically distributed random variables with distribution $\mathbb{P}$ and,  for any $x_{1},\ldots,x_{p+1}\in\mathbb{R}^{p},$ $S[x_{1},\ldots,x_{p+1}]$ is the set
	\begin{equation}\label{S}
		S[x_{1},\ldots,x_{p+1}] := \{\lambda_{1}x_{1} + \ldots + \lambda_{p+1}x_{p+1} : \sum_{i=1}^{p+1}\lambda_{i} = 1, \lambda_{i}\geq 0\},
	\end{equation}
	i.e., $S[x_{1},\ldots,x_{p+1}]$ is the convex hull of the points $x_{1},\ldots,x_{p+1}.$
	A characterization of simplices in $\R^p$  is provided in the next result.
	
	\begin{proposition}\label{proposicionsimplicieproyeccion}
		For any $x_{1},\ldots,x_{p+1}\in\mathbb{R}^{p},$ 	\begin{equation}\nonumber 		S[x_{1},\ldots,x_{p+1}] = \{x\in\mathbb{R}^{p} : \langle u,x\rangle\in [m(u), M(u)] \text{ for all } u\in\mathbb{S}^{p-1}\}, 	
		\end{equation} 	with
		$m(u):= \min\{\langle u,x_{1}\rangle,\ldots,\langle
		u,x_{p+1}\rangle\}$ and $M(u) := \max\{\langle u,x_{1}\rangle,\ldots,\langle u,x_{p+1}\rangle\}.$ 	
	\end{proposition}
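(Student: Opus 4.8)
The plan is to prove the two set inclusions separately. The key observation is that $M(u)$ is precisely the support function of the convex hull $S[x_1,\ldots,x_{p+1}]$ evaluated at $u$, and that $m(u) = -M(-u)$; consequently the two-sided membership condition on the right-hand side is nothing but the standard support-function characterization of a compact convex set, applied over the whole sphere $\mathbb{S}^{p-1}$.

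For the inclusion $\subseteq$, I would take an arbitrary point $x = \sum_{i=1}^{p+1}\lambda_i x_i$ with $\lambda_i \geq 0$ and $\sum_{i=1}^{p+1}\lambda_i = 1$, and fix $u \in \mathbb{S}^{p-1}$. Then $\langle u, x\rangle = \sum_{i=1}^{p+1}\lambda_i \langle u, x_i\rangle$ is a convex combination of the real numbers $\langle u, x_1\rangle,\ldots,\langle u, x_{p+1}\rangle$, hence lies in the interval $[m(u), M(u)]$ between their minimum and maximum. Since $u$ was arbitrary, $x$ satisfies the defining condition of the right-hand side. This direction is routine.

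For the reverse inclusion $\supseteq$, I would argue by contraposition. Suppose $x \notin S[x_1,\ldots,x_{p+1}]$. Being the convex hull of finitely many points, $S[x_1,\ldots,x_{p+1}]$ is a non-empty compact convex set, so the separating hyperplane theorem yields a direction $u_0 \in \mathbb{S}^{p-1}$ that strictly separates $\{x\}$ from it, namely $\langle u_0, x\rangle > \sup_{y \in S[x_1,\ldots,x_{p+1}]}\langle u_0, y\rangle$. Because a linear functional attains its supremum over a convex hull at one of the generating points, $\sup_{y \in S[x_1,\ldots,x_{p+1}]}\langle u_0, y\rangle = \max\{\langle u_0, x_1\rangle,\ldots,\langle u_0, x_{p+1}\rangle\} = M(u_0)$. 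Hence $\langle u_0, x\rangle > M(u_0)$, so $x$ violates the membership condition, which shows that the right-hand side is contained in $S[x_1,\ldots,x_{p+1}]$.

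The only step requiring care, and the crux of the argument, is the strict separation in the reverse inclusion; the identity $\sup_{y}\langle u, y\rangle = M(u)$ (and dually $\inf_y \langle u, y\rangle = m(u)$) then follows from the first inclusion together with the fact that each vertex $x_i$ itself lies in $S[x_1,\ldots,x_{p+1}]$. I do not anticipate a genuine obstacle, since compactness of the convex hull of finitely many points is automatic and guarantees that separation is strict; one may even note that, as $u$ ranges over all of $\mathbb{S}^{p-1}$, the relation $m(u) = -M(-u)$ makes the lower constraints redundant, so it suffices to verify $\langle u, x\rangle \leq M(u)$ for every $u$.
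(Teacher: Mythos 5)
Your proof is correct and takes essentially the same route as the paper's: the routine convex-combination argument for the inclusion $\subseteq$, and strict separation of $\{x\}$ from the compact convex hull via the hyperplane separation theorem for $\supseteq$. The only cosmetic differences are that the paper argues by contradiction rather than contraposition and leaves implicit the identity $\sup_{y\in S[x_{1},\ldots,x_{p+1}]}\langle u,y\rangle = M(u)$, which you usefully spell out.
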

	If the $X_i$'s are affinely independent, $S[X_{1},\ldots,X_{p+1}]$ is by definition a (random) $p$-dimensional simplex, which explains the name `simplicial depth'. Indeed, the $X_i$'s are affinely independent, almost
	surely, provided that $\mathbb{P}$ assigns zero probability to any lower-dimensional subspace of $\R^p;$ which is the case for continuous distributions. In the statistical depth
	literature, the name `simplex' reflects the fact that exactly $p+1$ points are taken for the convex hull, although it can fail to be $p$-dimensional for an arbitrary distribution $\mathbb{P}$. With this in mind, we will
	freely call $S[X_{1},\ldots,X_{p+1}]$ a simplex in the sequel.
	
	% %\begin{remark} {\bf >Eliminar?} %Proposition \ref{proposicionsimplicieproyeccion} is trivial for  $p = 1.$ Let $x_{1},x_{2}\in\mathbb{R}$.  By (\ref{S}) we have that $S[x_{1},x_{2}] =
	%[\min\{x_{1},x_{2}\},\max\{x_{1},x_{2}\}]$. %The proof follows from %$\mathbb{S}^{0} = \{-1,1\}$ and the fact that $x\in [\min\{x_{1},x_{2}\},\max\{x_{1},x_{2}\}]$ if, and only if, $-x\in [-\max\{x_{1},x_{2}\},-\min\{x_{1},x_{2}\}]$. %\end{remark}
	
	Before proposing plausible fuzzy depth instances inspired by the simplicial depth, we study how to adapt simplices to our context. To the best of our knowledge, the literature contains no notion of a simplex in
	$\mathcal{F}_{c}(\mathbb{R}^{p})$.  In  \citep{Cascos}, however, a {\em band}  generated by compact and convex sets is defined, which coincides with our definition of a pseudosimplex in $\mathcal{K}_{c}(\mathbb{R}^{p})$
	(Definition \ref{SimplexCompact} below). We  analyze it first to later  make use of it in our proposed  definition of a pseudosimplex  in $\mathcal{F}_{c}(\mathbb{R}^{p})$.  The justification for using the definition in
	\citep{Cascos} is that, according to Proposition \ref{proposicionsimplicieproyeccion}, the simplex generated by $p+1$ points, $x_{1},\ldots,x_{p+1}$, coincides with the set of points whose projections in every direction
	$u\in\mathbb{S}^{p-1}$ are in the closed interval generated by the minimum  and the maximum of $\langle u, x_{1}\rangle ,\ldots,\langle u, x_{p+1}\rangle$. Thus, replacing in this characterization the inner products by the
	support function of the elements in $\mathcal{K}_{c}(\mathbb{R}^{p})$  yields the following definition.
	
	\begin{definition}\label{SimplexCompact}
		The {\em pseudosimplex} generated by $A_{1},\ldots,A_{p+1}\in\mathcal{K}_{c}(\mathbb{R}^{p})$ is 	\begin{equation}\nonumber 		S_c[A_{1},\ldots,A_{p+1}]  := \{A\in\mathcal{K}_{c}(\mathbb{R}^{p}) : s_{A}(u)\in [m(u),
			M(u)] \text{ for all } u\in\mathbb{S}^{p-1}\}, 	\end{equation} 	where $m(u) := \min\{s_{A_{1}}(u),\ldots,s_{A_{p+1}}(u)\}$ and $M(u) := \max\{s_{A_{1}}(u),\ldots,s_{A_{p+1}}(u)\}.$
	\end{definition}
	As simplices are defined to be subsets of linear spaces and $\pkc(\R^d)$ and $\pfc(\R^d)$ are not linear but they embed into appropriate linear spaces (e.g., by identifying their elements with support functions), there
	arises the question whether, after such an embedding, $S_c[A_1,\ldots, A_{p+1}]$ becomes an infinite-dimensional simplex \cite[Section 1.5, pp. 46--53]{ChoquetOrder}. The name `pseudosimplex' avoids prejudicing the question.

	As the operations of sum and product by a scalar are defined in  $\mathcal{K}_{c}(\mathbb{R}^{p})$ (Section \ref{Aarith}) 	 an alternative could be to  define the simplex generated by
	$A_{1},\ldots,A_{p+1}\in\mathcal{K}_{c}(\mathbb{R}^{p})$ as the set of all convex combinations of these generating elements, that is 	\begin{equation}\label{SimplexCompact2} 		\{A\in\mathcal{K}_{c}(\mathbb{R}^{p}) :
		A = \sum_{i = 1}^{p+1}\lambda_{i}\cdot A_{i}, \mbox{ with }  \sum_{i = 1}^{p+1}\lambda_{i} = 1 \mbox{ and } \lambda_{i}\geq 0 \}. 	\end{equation} That corresponds to the convex hull of the set $\{A_{1},\ldots,A_{p+1}\}$
	when $\pkc(\R^d)$ is regarded as a convex combination space \citep{JTP}. The next result proves that every simplex in the sense of  \eqref{SimplexCompact2} is contained in the corresponding pseudosimplex.  Example
	\ref{ExampleSimplices} shows that both sets are not necessarily equal. 
	\begin{proposition}\label{PropositionSimplices}
		For any $A_{1},\ldots,A_{p+1}\in\mathcal{K}_{c}(\mathbb{R}^{p}),$
		$$\{A\in\mathcal{K}_{c}(\mathbb{R}^{p}) : A = \sum_{i = 1}^{p+1}\lambda_{i}\cdot A_{i}, \sum_{i = 1}^{p+1}\lambda_{i} = 1, \lambda_{i}\geq 0
		\}\subseteq S_{c}[A_{1},\ldots,A_{p+1}].$$
	\end{proposition}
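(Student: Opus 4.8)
The plan is to verify directly the defining membership condition of the pseudosimplex in Definition~\ref{SimplexCompact}. Fix an arbitrary element of the left-hand set, say $A = \sum_{i=1}^{p+1}\lambda_i\cdot A_i$ with $\sum_{i=1}^{p+1}\lambda_i = 1$ and $\lambda_i\geq 0$. Since $\pkc(\R^p)$ is closed under Minkowski addition and nonnegative scalar multiplication, $A$ is again an element of $\pkc(\R^p)$, so it is a legitimate candidate for membership in $S_c[A_1,\ldots,A_{p+1}]$. It then suffices to show that $s_A(u)\in[m(u),M(u)]$ for every $u\in\mathbb{S}^{p-1}$.

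The key step is to compute $s_A(u)$ using the additivity and positive homogeneity of the support function. Applying relation~\eqref{soportesuma} repeatedly (with the $\alpha$-argument suppressed, since for elements of $\pkc(\R^p)$ the support function does not depend on $\alpha$), one obtains
\begin{equation}\nonumber
s_A(u) = s_{\sum_{i=1}^{p+1}\lambda_i\cdot A_i}(u) = \sum_{i=1}^{p+1}\lambda_i\, s_{A_i}(u).
\end{equation}
Because the $\lambda_i$ are nonnegative and sum to $1$, the right-hand side is a convex combination of the real numbers $s_{A_1}(u),\ldots,s_{A_{p+1}}(u)$. Any convex combination of finitely many reals lies between their minimum and their maximum, so
\begin{equation}\nonumber
m(u) = \min_{i} s_{A_i}(u)\leq \sum_{i=1}^{p+1}\lambda_i\, s_{A_i}(u) \leq \max_{i} s_{A_i}(u) = M(u),
\end{equation}
which is exactly the required condition. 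As $u\in\mathbb{S}^{p-1}$ was arbitrary, $A\in S_c[A_1,\ldots,A_{p+1}]$, and the inclusion follows.

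There is no serious obstacle here: the entire argument rests on the linearity of the support function under the fuzzy-set arithmetic recorded in~\eqref{soportesuma} together with the elementary bound for convex combinations. The only points deserving a line of care are confirming that $A$ genuinely belongs to $\pkc(\R^p)$ and that~\eqref{soportesuma}, stated for $\pfc(\R^p)$, specializes correctly to $\pkc(\R^p)$ through the identification of each compact convex set with its indicator function. I would also emphasize that the inclusion can be strict, since a set whose support function lies pointwise between $m$ and $M$ need not be expressible as a convex combination of the $A_i$; this is precisely the phenomenon that Example~\ref{ExampleSimplices} is meant to exhibit.
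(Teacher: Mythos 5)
Your proof is correct and follows essentially the same route as the paper's: apply the linearity of the support function from \eqref{soportesuma} to get $s_A(u)=\sum_{i=1}^{p+1}\lambda_i s_{A_i}(u)$, then bound this convex combination between $m(u)$ and $M(u)$. Your added remarks on closure of $\mathcal{K}_{c}(\mathbb{R}^{p})$ under the operations and on the strictness of the inclusion are sensible but not needed beyond what the paper records.
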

	
	\begin{example}\label{ExampleSimplices}
		Let $p = 1,$ $A = [0,1]$ and $B = [3,4].$ Then 	\begin{equation}\nonumber 		S_c[A,B] = \{[x,y] : x\in [0,3], y\in [1,4]\} 	\end{equation} 	while the simplex in the sense of Equation \eqref{SimplexCompact2} is 	
		\begin{equation}\nonumber 		S := \{[3 \lambda, 1 + 3 \lambda] : \lambda\in [0,1]\}.
		\end{equation} 	
		For instance, $\{2\}\in S_c[A,B]$ but $\{2\}\not\in S$.
	\end{example}
	The choice of the pseudosimplex, instead of the convex hull simplex in \eqref{SimplexCompact2}, is based on cases like the last example. Intuitively, it is hard to deny that $\{2\}$ is between $A$ and $B$ in a definite
	sense, but it cannot be written as a convex combination of them. In this connection, see  Proposition \ref{aaa} below concerning the role of `betweenness' in the definition of pseudosimplices in the fuzzy case.

	We will extend now the notion of a pseudosimplex to the fuzzy case by working $\alpha$-level by $\alpha$-level.
	
	\begin{definition}\label{pseudosimplex}
		The {\em pseudosimplex} 	 generated by $A_{1},\ldots,A_{p+1}\in\mathcal{F}_{c}(\mathbb{R}^{p})$ is 	\begin{equation}\nonumber 	S_{F}[A_{1},\ldots,A_{p+1}] := \{A\in\mathcal{F}_{c}(\mathbb{R}^{p}) : A_{\alpha}\in
			S_c[(A_{1})_{\alpha},\ldots,(A_{p+1})_{\alpha}] \text{ for all } \alpha\in [0,1]\}, 	\end{equation} 	where $(A_{i})_{\alpha}$ denotes the $\alpha$-level of  $A_{i}.$ % $i\in\{1,\ldots,p+1\}$ and  $\alpha\in [0,1]$.
	\end{definition}
	As fuzzy sets are a generalization of ordinary sets in $\mathbb{R}^{p}$, it is interesting to underline that the notion of a pseudosimplex generated by crisp sets contains that of a simplex in the multivariate case. For
	that, we consider the class of fuzzy sets
	$$\mathcal{R}^{p} := \{\text{I}_{\{x\}}\in\mathcal{F}_{c}(\mathbb{R}^{p}): x\in\mathbb{R}^{p}\},$$
	which can be identified with $\mathbb{R}^{p}$ (Section \ref{Aarith}). %
	\begin{proposition}\label{zzz}
		For any $x_{1},\ldots,x_{p+1}\in\mathbb{R}^{p},$ 	\begin{equation}\nonumber 		S_{F}[\text{I}_{\{x_{1}\}},\ldots,\text{I}_{\{x_{p+1}\}}]\cap\mathcal{R}^{p} = \{\text{I}_{\{x\}} : x\in S[x_{1},\ldots,x_{p+1}]\}. 	
		\end{equation}
	\end{proposition}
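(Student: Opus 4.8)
The plan is to reduce membership in $S_{F}[\text{I}_{\{x_{1}\}},\ldots,\text{I}_{\{x_{p+1}\}}]\cap\mathcal{R}^{p}$ to membership in the multivariate simplex $S[x_{1},\ldots,x_{p+1}]$ by unwinding the two nested definitions and invoking the projection characterization of Proposition \ref{proposicionsimplicieproyeccion}. I would first record the elementary facts about the generators: for each $i$ and every $\alpha\in[0,1]$ one has $(\text{I}_{\{x_{i}\}})_{\alpha}=\{x_{i}\}$, since all the $\alpha$-levels of a singleton indicator, as well as its closed support at $\alpha=0$, coincide with that singleton; consequently $s_{\{x_{i}\}}(u)=\langle u,x_{i}\rangle$ for every $u\in\mathbb{S}^{p-1}$. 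It follows that the quantities $m(u)$ and $M(u)$ appearing in Definition \ref{SimplexCompact}, formed from $\{x_{1}\},\ldots,\{x_{p+1}\}$, agree exactly with the $m(u)$ and $M(u)$ of Proposition \ref{proposicionsimplicieproyeccion}, and in particular do not depend on $\alpha$.

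Next I would take an arbitrary element of $\mathcal{R}^{p}$, necessarily of the form $\text{I}_{\{x\}}$ for some $x\in\mathbb{R}^{p}$, and note that its $\alpha$-levels are all equal to $\{x\}$. By Definition \ref{pseudosimplex}, $\text{I}_{\{x\}}\in S_{F}[\text{I}_{\{x_{1}\}},\ldots,\text{I}_{\{x_{p+1}\}}]$ if and only if $\{x\}=(\text{I}_{\{x\}})_{\alpha}\in S_{c}[\{x_{1}\},\ldots,\{x_{p+1}\}]$ for every $\alpha\in[0,1]$. Since neither the level $\{x\}$ nor the pseudosimplex $S_{c}[\{x_{1}\},\ldots,\{x_{p+1}\}]$ depends on $\alpha$, this whole family of conditions collapses to the single requirement $\{x\}\in S_{c}[\{x_{1}\},\ldots,\{x_{p+1}\}]$.

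Finally I would unfold Definition \ref{SimplexCompact}: using $s_{\{x\}}(u)=\langle u,x\rangle$, the condition $\{x\}\in S_{c}[\{x_{1}\},\ldots,\{x_{p+1}\}]$ reads $\langle u,x\rangle\in[m(u),M(u)]$ for all $u\in\mathbb{S}^{p-1}$, which is precisely the defining condition of $S[x_{1},\ldots,x_{p+1}]$ given in Proposition \ref{proposicionsimplicieproyeccion}. Chaining these equivalences gives $\text{I}_{\{x\}}\in S_{F}[\text{I}_{\{x_{1}\}},\ldots,\text{I}_{\{x_{p+1}\}}]$ if and only if $x\in S[x_{1},\ldots,x_{p+1}]$, which is exactly the asserted set equality.

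There is no serious obstacle here: the argument is a routine unfolding of definitions. The only points deserving care are that the level computation should explicitly cover the boundary case $\alpha=0$ (where the relevant level is the closed support), and that the intersection with $\mathcal{R}^{p}$ is precisely what makes the $\alpha$-independence usable, since a general member of $S_{F}$ may have genuinely $\alpha$-varying levels and the collapse to a single condition is special to singleton indicators. I would also state at the outset the identification of $\mathcal{R}^{p}$ with $\mathbb{R}^{p}$ via $\text{I}_{\{x\}}\leftrightarrow x$ recalled in Section \ref{Aarith}, so that the displayed equality is read correctly as an identity of subsets of $\mathcal{R}^{p}$.
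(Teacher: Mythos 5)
Your proof is correct: the paper itself gives no argument, stating only that ``the proof of the result is trivial,'' and your routine unfolding of Definitions \ref{SimplexCompact} and \ref{pseudosimplex} via $s_{\{x\}}(u)=\langle u,x\rangle$ together with Proposition \ref{proposicionsimplicieproyeccion} is exactly the intended argument. Your attention to the $\alpha=0$ level (closed support) and to the fact that the $\alpha$-independence is special to elements of $\mathcal{R}^{p}$ is appropriate and consistent with the paper's conventions.
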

	The proof of the result is trivial. A direct implication of the proposition  is
	\begin{equation}\label{C1}
		\{\text{I}_{\{x\}} : x\in S[x_{1},\ldots,x_{p+1}]\}\subseteq S_{F}[\text{I}_{\{x_{1}\}},\ldots,\text{I}_{\{x_{p+1}\}}].
	\end{equation}
	Furthermore,
	\begin{equation}\label{C2}
		\{\text{I}_{\{x\}} : x\in S[x_{1},\ldots,x_{p+1}]\}\subsetneq S_{F}[\text{I}_{\{x_{1}\}},\ldots,\text{I}_{\{x_{p+1}\}}]
	\end{equation}
	provided there exist $i,j\in\{{1},\ldots, {p+1}\}$ such that $x_i\neq x_j$.
	%This is because of the following.
	As $S[x_{1},\ldots,x_{p+1}]$ is a convex set, it contains the segment joining $x_{i}$ and $x_{j}.$  Denoting it  by $\overline{x_{i}x_{j}},$   we have  $$\text{I}_{\overline{x_{i}x_{j}}}\in
	S_{F}[\text{I}_{\{x_{1}\}},\ldots,\text{I}_{\{x_{p+1}\}}].$$ However, $$\text{I}_{\overline{x_{i}x_{j}}}\not\in\{\text{I}_{\{x\}} : x\in S[x_{1},\ldots,x_{p+1}]\}$$ because $\overline{x_{i}x_{j}}$ is not a single point.
	
	Another corollary is that the result in Proposition \ref{zzz} is also obtained for $\mathcal{K}_{c}(\mathbb{R}^{p}).$ For that, we denote  $\mathcal{R}_{c}^{p} := \{\{x\}\in\mathcal{K}_{c}(\mathbb{R}^{p}) :
	x\in\mathbb{R}^{p}\},$ the set of singletons. 	\begin{corollary}\label{corolariozzz} 		For any $x_{1},\ldots,x_{p+1}\in\mathbb{R}^{p},$ we have that 		\begin{equation}\nonumber 			
			S_{c}[\{x_{1}\},\ldots,\{x_{p+1}\}]\cap\mathcal{R}_{c}^{p} = \{\{x\} : x\in S[x_{1},\ldots,x_{p+1}]\}. 		\end{equation} 	\end{corollary}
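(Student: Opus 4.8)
The plan is to deduce the statement from Proposition \ref{zzz} by exploiting the fact that the indicator functions of singletons have constant $\alpha$-levels. First I would record that, for each $i$, the fuzzy set $\text{I}_{\{x_i\}}$ satisfies $(\text{I}_{\{x_i\}})_\alpha = \{x_i\}$ for every $\alpha\in[0,1]$, and likewise $(\text{I}_{\{x\}})_\alpha = \{x\}$. Under the identification of $\mathcal{R}^p$ with $\mathcal{R}_c^p$ (each $\text{I}_{\{x\}}$ corresponding to the singleton $\{x\}$), these observations let me translate the fuzzy pseudosimplex condition into a single crisp one.

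Next I would unfold Definition \ref{pseudosimplex}: an element $\text{I}_{\{x\}}\in\mathcal{R}^p$ lies in $S_F[\text{I}_{\{x_1\}},\ldots,\text{I}_{\{x_{p+1}\}}]$ precisely when $(\text{I}_{\{x\}})_\alpha\in S_c[(\text{I}_{\{x_1\}})_\alpha,\ldots,(\text{I}_{\{x_{p+1}\}})_\alpha]$ for all $\alpha$. By the previous step, every $\alpha$-level appearing here is independent of $\alpha$, so this whole family of conditions collapses to the single requirement $\{x\}\in S_c[\{x_1\},\ldots,\{x_{p+1}\}]$.

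Finally I would invoke Proposition \ref{zzz}, which gives $\text{I}_{\{x\}}\in S_F[\text{I}_{\{x_1\}},\ldots,\text{I}_{\{x_{p+1}\}}]$ if and only if $x\in S[x_1,\ldots,x_{p+1}]$. Chaining the two equivalences yields $\{x\}\in S_c[\{x_1\},\ldots,\{x_{p+1}\}]$ if and only if $x\in S[x_1,\ldots,x_{p+1}]$, which is exactly the claimed set equality once we restrict to singletons, i.e.\ intersect with $\mathcal{R}_c^p$.

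There is essentially no hard step: the only point to be careful about is confirming that the $\alpha=0$ case of the level condition also reduces correctly, since the $0$-level is defined as the closed support rather than as a superlevel set; but for $\text{I}_{\{x\}}$ the closed support is again $\{x\}$, so no difficulty arises. Alternatively, one could bypass Proposition \ref{zzz} altogether and argue directly from Definition \ref{SimplexCompact} together with Proposition \ref{proposicionsimplicieproyeccion}, since $s_{\{x\}}(u)=\langle u,x\rangle$ turns the pseudosimplex membership condition into the projection characterization of $S[x_1,\ldots,x_{p+1}]$ verbatim.
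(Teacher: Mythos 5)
Your proof is correct and follows essentially the same route as the paper, which states the result as an immediate corollary of Proposition \ref{zzz} (whose proof it calls trivial): you simply make explicit the observation that singleton indicators have constant $\alpha$-levels, so the level-wise condition collapses to the single crisp condition. Your alternative direct argument via $s_{\{x\}}(u)=\langle u,x\rangle$ and Proposition \ref{proposicionsimplicieproyeccion} is also valid and equally in the spirit of the paper's definitions.
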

	We also have the  inclusions in \eqref{C1} and \eqref{C2}  for this particular case. %An implication from Corollary \ref{corolariozzz} would be that of crisp sets, in which  the pseudosimplex is generated by sets in $\pkc(\R)$, see Definition \ref{SimplexCompact}. 
	An example is that of the pseudosimplex generated by $\{0\}$ and $\{3\},$ which  contains not only singletons but also sets like the interval $[1,2]$ which lies entirely in
	the gap between 0 and 3.

	The Ram\'\i k--\v R\'\i man\'ek partial order in $\pfc(\R)$ \cite[Definition 3]{RR} is given by
	$$A_1\preceq A_2 \Leftrightarrow \inf (A_1)_\alpha\le \inf (A_2)_\alpha, \sup (A_1)_\alpha\le \sup (A_2)_\alpha \;\;\forall \alpha\in (0,1].$$
	This provides a natural (partial) ordering in $\pfc(\R),$ which ranking methods for fuzzy numbers should  be consistent with.
	
	\begin{proposition}\label{aaa}
		Let $A_1,A_2\in\pfc(\R)$. If $A_1\preceq A_2$ then $S_F[A_1,A_2]$ is  the set of all $A\in\pfc(\R^p)$ such that $A_1\preceq A\preceq A_2$.
	\end{proposition}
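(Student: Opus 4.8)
The plan is to unwind both conditions---membership in $S_F[A_1,A_2]$ and the betweenness $A_1\preceq A\preceq A_2$---into inequalities about the endpoints $\inf A_\alpha$ and $\sup A_\alpha$ of the $\alpha$-levels, and to check they coincide. Since we work over $\R$, the relevant sphere is $\mathbb{S}^0=\{-1,1\}$ and every level is a compact interval $A_\alpha=[\inf A_\alpha,\sup A_\alpha]$. First I would record the two values of the support function, namely $s_A(1,\alpha)=\sup A_\alpha$ and $s_A(-1,\alpha)=-\inf A_\alpha$. With these, Definition \ref{SimplexCompact} applied to the levels reads, for each fixed $\alpha$, as the pair of constraints $\sup A_\alpha\in[m(1),M(1)]$ and $-\inf A_\alpha\in[m(-1),M(-1)]$, where $m,M$ are the pointwise minimum and maximum of $s_{A_1}$ and $s_{A_2}$.

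Next I would use the hypothesis $A_1\preceq A_2$ to resolve these minima and maxima. For $u=1$ it gives $s_{A_1}(1,\alpha)=\sup(A_1)_\alpha\le\sup(A_2)_\alpha=s_{A_2}(1,\alpha)$, so $m(1)=\sup(A_1)_\alpha$ and $M(1)=\sup(A_2)_\alpha$. For $u=-1$ the order reverses under the minus sign: $\inf(A_1)_\alpha\le\inf(A_2)_\alpha$ yields $s_{A_1}(-1,\alpha)\ge s_{A_2}(-1,\alpha)$, whence $m(-1)=-\inf(A_2)_\alpha$ and $M(-1)=-\inf(A_1)_\alpha$. Substituting, the level-wise condition $A_\alpha\in S_c[(A_1)_\alpha,(A_2)_\alpha]$ becomes exactly
\begin{equation*}
\inf(A_1)_\alpha\le\inf A_\alpha\le\inf(A_2)_\alpha \quad\text{and}\quad \sup(A_1)_\alpha\le\sup A_\alpha\le\sup(A_2)_\alpha.
\end{equation*}
For $\alpha\in(0,1]$ these four inequalities are, by definition, precisely the conjunction $A_1\preceq A\preceq A_2$, which settles the equivalence on the positive levels.

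The one point that needs care---and the step I expect to be the genuine obstacle---is the level $\alpha=0$, since membership in the pseudosimplex is demanded for all $\alpha\in[0,1]$ whereas the Ram\'\i k--\v R\'\i man\'ek order only constrains $\alpha\in(0,1]$. To close this gap I would show that the $\alpha=0$ inequalities are automatically implied by those for $\alpha>0$. Indeed, the closed support is the closure of $\bigcup_{\alpha>0}A_\alpha$, and because the levels are nested and bounded one has $\inf A_0=\inf_{\alpha>0}\inf A_\alpha$ and $\sup A_0=\sup_{\alpha>0}\sup A_\alpha$, with the analogous identities for $A_1$ and $A_2$. Taking the infimum (resp.\ supremum) over $\alpha>0$ in the displayed inequalities transfers them verbatim to $\alpha=0$. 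Hence imposing the $\alpha=0$ constraint adds nothing, so $\{A\in\pfc(\R):A\in S_F[A_1,A_2]\}$ coincides with $\{A\in\pfc(\R):A_1\preceq A\preceq A_2\}$. Reading the chain of equivalences in both directions yields the two inclusions and completes the proof.
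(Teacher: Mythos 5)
Your proof is correct and follows essentially the same route as the paper's: rewrite $s_A(1,\alpha)=\sup A_\alpha$ and $s_A(-1,\alpha)=-\inf A_\alpha$, use $A_1\preceq A_2$ to identify $m$ and $M$ at each level, and observe that the resulting four inequalities are exactly $A_1\preceq A\preceq A_2$. Your explicit treatment of $\alpha=0$ (transferring the inequalities from $(0,1]$ to the closed support via $\inf A_0=\inf_{\alpha>0}\inf A_\alpha$ and $\sup A_0=\sup_{\alpha>0}\sup A_\alpha$) is a valid extra step that the paper's proof passes over silently, since the Ram\'\i k--\v R\'\i man\'ek order constrains only $\alpha\in(0,1]$ while the pseudosimplex condition is imposed on all of $[0,1]$.
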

	
	Propositions \ref{zzz} and \ref{aaa} confirm that pseudosimplices are consistent with a natural notion of `being between' for fuzzy numbers; as opposed to what would have happened with convex hull simplices.
	
	\section{Simplicial depths for fuzzy sets}
	\label{simplicial}

	Our constructions of an analog to simplicial depth are not the direct result of plugging the fuzzy pseudosimplex into the simplicial depth formula. To
	understand why, we first propose and discuss a more straightforward adaptation.
	
	The \emph{naive simplicial depth}, based on $\mathcal{J}\subseteq\mathcal{F}_{c}(\mathbb{R}^{p})$ and $ \mathcal{H}\subseteq L^{0}[\mathcal{F}_{c}(\mathbb{R}^{p})]$, of a fuzzy set $A\in\mathcal{J}$ with respect to a fuzzy
	random variable  $\mathcal{X}\in\mathcal{H}$ is %given by the function $D_{nS} : \mathcal{J}\times\mathcal{H}\rightarrow [0,1]$ 
	\begin{equation}\label{DnS} 		D_{nS}(A;\mathcal{X}) := \mathbb{P}(A\in S_{F}[\mathcal{X}_{1},\ldots,\mathcal{X}_{p+1}]), 	\end{equation} 	where  $\mathcal{X}_{1},\ldots,\mathcal{X}_{p+1}$ are $p+1$ independent
	observations. 	Setting 	\begin{eqnarray}\label{m}m_{\mathcal{X}}(u,\alpha) := \min\{s_{\mathcal{X}_{1}}(u,\alpha),\ldots,s_{\mathcal{X}_{p+1}}(u,\alpha)\},\\
		\label{Mu}M_{\mathcal{X}}(u,\alpha) := \max\{s_{\mathcal{X}_{1}}(u,\alpha),\ldots,s_{\mathcal{X}_{p+1}}(u,\alpha)\},
	\end{eqnarray}
	for  any $(u,\alpha)\in\mathbb{S}^{p-1}\times [0,1],$ we can also express this function as 	
	\begin{equation}\label{expressionSDF1} 		
		D_{nS}(A;\mathcal{X}) = \mathbb{P}\left(s_{A}(u,\alpha)\in [m_{\mathcal{X}}(u,\alpha),
		M_{\mathcal{X}}(u,\alpha)] \text{ for all }(u,\alpha)\in\mathbb{S}^{p-1}\times [0,1] \right). 	
	\end{equation} 	
	
	It is not self-evident that $D_{nS}$ is well defined:
	\begin{itemize}
		\item[(i)] In (\ref{DnS}), it is not clear whether $S_{F}[\mathcal{X}_{1},\ldots,\mathcal{X}_{p+1}]$ is a random set in $\pfc(\R^p)$, which would ensure that the probability makes sense.
		\item[(ii)] In (\ref{expressionSDF1}), the event depends on uncountably many $(u,\alpha)$, making it an uncountable intersection which might fail to be measurable.
	\end{itemize}
	Thus it becomes necessary to establish the measurability of those events. The proof that $D_{nS}$ is well defined, as is the case with the simplicial depths in the sequel, is presented in Section {\color{red}\ref{proofs}}.

	The proposed naive simplicial fuzzy  depth  generalizes  the multivariate simplicial depth, as observed below by taking  $\mathcal{J} = \mathcal{R}^{p}.$		 	
	\begin{proposition}
		\label{simplicialRp} 		  For any random variable $X$ on $\mathbb{R}^{p}$  and $x\in\mathbb{R}^{p},$ $$D_{nS}(\text{I}_{\{x\}};\text{I}_{X}) = SD(x;\mathbb P_X).$$
	\end{proposition}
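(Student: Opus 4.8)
The plan is to reduce the statement to Proposition \ref{zzz} by identifying the two events $\omega$-wise. First I would write out both sides explicitly. By the definition of $D_{nS}$ in \eqref{DnS}, letting $X_1,\ldots,X_{p+1}$ be independent copies of $X$, the $p+1$ independent observations of the fuzzy random variable $\text{I}_X$ are precisely $\text{I}_{\{X_1\}},\ldots,\text{I}_{\{X_{p+1}\}}$, because $y\mapsto\text{I}_{\{y\}}$ is a deterministic measurable map, so applying it to independent copies of $X$ yields independent copies of $\text{I}_X$. Hence
$$D_{nS}(\text{I}_{\{x\}};\text{I}_X) = \mathbb{P}\big(\text{I}_{\{x\}}\in S_F[\text{I}_{\{X_1\}},\ldots,\text{I}_{\{X_{p+1}\}}]\big).$$

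Next, I would observe that $\text{I}_{\{x\}}\in\mathcal{R}^p$ by construction, so the condition $\text{I}_{\{x\}}\in S_F[\text{I}_{\{X_1\}},\ldots,\text{I}_{\{X_{p+1}\}}]$ is equivalent to $\text{I}_{\{x\}}\in S_F[\text{I}_{\{X_1\}},\ldots,\text{I}_{\{X_{p+1}\}}]\cap\mathcal{R}^p$. This intersection is exactly the object characterized by Proposition \ref{zzz}, which gives, for each fixed $\omega$ (applied with $x_i=X_i(\omega)$),
$$S_F[\text{I}_{\{X_1(\omega)\}},\ldots,\text{I}_{\{X_{p+1}(\omega)\}}]\cap\mathcal{R}^p = \{\text{I}_{\{y\}} : y\in S[X_1(\omega),\ldots,X_{p+1}(\omega)]\}.$$
Thus $\text{I}_{\{x\}}$ lies in this set if and only if $x\in S[X_1(\omega),\ldots,X_{p+1}(\omega)]$; that is, the events $\{\text{I}_{\{x\}}\in S_F[\text{I}_{\{X_1\}},\ldots,\text{I}_{\{X_{p+1}\}}]\}$ and $\{x\in S[X_1,\ldots,X_{p+1}]\}$ coincide as subsets of $\Omega$.

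Taking probabilities then yields $D_{nS}(\text{I}_{\{x\}};\text{I}_X)=\mathbb{P}(x\in S[X_1,\ldots,X_{p+1}])=SD(x;\mathbb{P}_X)$, which is the claim. I expect the argument to be almost immediate once Proposition \ref{zzz} is invoked; the only points requiring a line of care are (a) checking that the independent observations of $\text{I}_X$ are genuinely the indicators of independent copies of $X$, a consequence of the embedding $y\mapsto\text{I}_{\{y\}}$ being deterministic, and (b) noting that intersecting with $\mathcal{R}^p$ loses nothing here because the query fuzzy set $\text{I}_{\{x\}}$ already lies in $\mathcal{R}^p$. Measurability of the event, needed for the probability to be well defined, is subsumed by the general well-definedness of $D_{nS}$ established in Section \ref{proofs}, so it need not be re-derived at this point.
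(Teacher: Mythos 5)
Your proof is correct and matches the paper's intended argument: the paper states only that ``the proof follows directly,'' and the direct route is exactly what you carry out, reducing the pseudosimplex membership to Proposition \ref{zzz} applied $\omega$-wise and identifying the events $\{\text{I}_{\{x\}}\in S_{F}[\text{I}_{\{X_{1}\}},\ldots,\text{I}_{\{X_{p+1}\}}]\}$ and $\{x\in S[X_{1},\ldots,X_{p+1}]\}$. Your two points of care (that independent observations of $\text{I}_{X}$ are the indicators of independent copies of $X$, and that intersecting with $\mathcal{R}^{p}$ loses nothing since $\text{I}_{\{x\}}\in\mathcal{R}^{p}$) are exactly the details the paper leaves implicit.
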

	The proof follows directly.  Although we replaced convex hull simplices by pseudosimplices, which are generally larger, this naive depth function may still result in a high number of ties at zero, which is inappropriate for
	certain applications such as classification. That is a consequence of the fuzzy set having to be completely contained in the pseudosimplex. An analogous problem was observed by L\'opez-Pintado and Romo when adapting
	simplicial depth to functional data in \citep{LopezRomoBand}. Their definition of {\em band depth} is aimed at ordering functional data and stems from the simplicial depth in the same way as our naive simplicial fuzzy
	depth. 	To overcome this shortcoming, in \citep{LopezRomoBand} a {\em modified band depth} is introduced which inspires our next definition. A similar reasoning is also found in \citep{halfregion,SimplicialGenton}, both in
	the functional setting.
	
	\begin{definition}\label{DmS}
		The \emph{modified simplicial depth},  based on $\mathcal{J}\subseteq\mathcal{F}_{c}(\mathbb{R}^{p})$ and $ \mathcal{H}\subseteq L^{0}[\mathcal{F}_{c}(\mathbb{R}^{p})]$, of a fuzzy set $A\in\mathcal{J}$ with respect to a
		random variable  $\mathcal{X}\in\mathcal{H}$ is 	\begin{equation}\nonumber 		D_{mS}(A;\mathcal{X}) := \text{E}(\mathcal{V}_{p}\otimes\nu\{(u,\alpha)\in\mathbb{S}^{p-1}\times [0,1] : s_{A}(u,\alpha)\in
			[m_{\mathcal{X}}(u,\alpha), M_{\mathcal{X}}(u,\alpha)]\}), 	\end{equation} 	where $m_{\mathcal{X}}(u,\alpha)$ and $M_{\mathcal{X}}(u,\alpha)$ are defined in  \eqref{m} and  \eqref{Mu} and
		$\mathcal{X}_{1},\ldots,\mathcal{X}_{p+1}$ 	 are independent observations of $\mathcal{X}.$
	\end{definition}
	By Fubini's Theorem (see Section \ref{proofs} for a detailed justification),
	\begin{equation}\label{ecuacionmSDFp}
		D_{mS}(A;\mathcal{X}) = \int_{\mathbb{S}^{p-1}}\int_{[0,1]} \mathbb{P}\left(s_{A}(u,\alpha)\in [m_{\mathcal{X}}(u,\alpha),M_{\mathcal{X}}(u,\alpha)]\right)  \dif\nu(\alpha) \dif\mathcal{V}_{p}(u).
	\end{equation}
	This inspires us to introduce the following definition of  simplicial fuzzy depth, which is also motivated by the Tukey depth in \cite{primerarticulo} (which is defined as an infimum over $\mathbb{S}^{p-1}$).
	
	\begin{definition}\label{DFS}
		The \emph{simplicial depth}  based on $\mathcal{J}\subseteq\mathcal{F}_{c}(\mathbb{R}^{p})$ and $ \mathcal{H}\subseteq L^{0}[\mathcal{F}_{c}(\mathbb{R}^{p})]$ of a fuzzy set $A\in\mathcal{J}$ with respect to a
		random variable  $\mathcal{X}\in\mathcal{H}$ is 	\begin{equation}\nonumber 		D_{FS}(A;\mathcal{X}) := \inf_{u\in\mathbb{S}^{p-1}}\text{E}(\nu\{\alpha\in [0,1] : s_{A}(u,\alpha)\in [m_{\mathcal{X}}(u,\alpha),
			M_{\mathcal{X}}(u,\alpha)]\}), 	\end{equation} 	where $m_{\mathcal{X}}(u,\alpha)$ and $M_{\mathcal{X}}(u,\alpha)$ are defined in  \eqref{m} and  \eqref{Mu} and $\mathcal{X}_{1},\ldots,\mathcal{X}_{p+1}$ are independent
		observations of $\mathcal{X}$.
	\end{definition}
	Again by Fubini's Theorem,
	\begin{equation}\label{ecuacionMSDFp}
		D_{FS}(A;\mathcal{X}) = \inf_{u\in\mathbb{S}^{p-1}}\int_{[0,1]} \mathbb{P}(s_{A}(u,\alpha)\in [m_{\mathcal{X}}(u,\alpha),M_{\mathcal{X}}(u,\alpha)]) \dif\nu(\alpha).
	\end{equation}
	
	The difference between both definitions could be understood in the following way. In  \eqref{ecuacionmSDFp} we take the average over $\mathbb{S}^{p-1}$ of the integral over $[0,1]$, while in  \eqref{ecuacionMSDFp} we take
	the infimum over $\mathbb{S}^{p-1}$ of the integral over $[0,1]$, that is, we consider the direction $u\in\mathbb{S}^{p-1}$ where the integral over $[0,1]$ is smallest.
	The next example shows the difference between
	$D_{mS}$ and $D_{FS},$ and their suitability under distinct scenarios. The example is in $\mathcal{F}_{c}(\mathbb{R}),$ in which the expressions in Definitions \ref{DmS} and \ref{DFS} reduce to 	\begin{equation}\label{em}
		D_{mS}(A;\mathcal{X}) = \displaystyle\cfrac{1}{2}\sum_{u\in\{-1,1\}}\text{E}(\nu\{\alpha\in [0,1] : s_{A}(u)\in [m_{\mathcal{X}}(u,\alpha),M_{\mathcal{X}}(u,\alpha)]\})
	\end{equation}
	and 	\begin{equation}\label{ed} D_{FS}(A;\mathcal{X}) = \min_{u\in\{-1,1\}}\{\text{E}(\nu\{\alpha\in [0,1] : s_{A}(u)\in [m_{\mathcal{X}}(u,\alpha),M_{\mathcal{X}}(u,\alpha)]\})\}.
	\end{equation}
	
	\begin{example}\label{Exfig}
		Let $(\{\omega_{1},\omega_{2}\},\mathcal{P}(\{\omega_{1},\omega_{2}\}),\mathbb{P})$ be a probabilistic space with $\mathbb{P}(\omega_{1}) = \mathbb{P}(\omega_{2})$. We consider the fuzzy random variable $$\mathcal{X} :
		\{\omega_{1},\omega_{2}\}\rightarrow\mathcal{F}_{c}(\mathbb{R}) \mbox{ defined by } \mathcal{X}(\omega_{1}) = \text{I}_{[1,2]} \mbox{ and } \mathcal{X}(\omega_{2}) = \text{I}_{[4,5]}.$$ Let $\mathcal{X}_1, \mathcal{X}_2$ be
		two independent observations of $\mathcal{X}$ such that $\mathcal{X}_i=\mathcal{X}(\omega_{i}),$ for $i=1,2.$  With this,  for each $\alpha\in [0,1],$ we have that $$s_{\mathcal{X}_{1}}(-1,\alpha) = -1\mbox{, }
		s_{\mathcal{X}_{1}}(1,\alpha) = 2\mbox{, } s_{\mathcal{X}_{2}}(-1,\alpha) = -4\mbox{ and } s_{\mathcal{X}_{2}}(1,\alpha) = 5.$$ 	
		Then,  in the present example, the expressions in \eqref{em} and \eqref{ed} for a general
		$A\in\mathcal{F}_{c}(\mathbb{R})$ result in
		\begin{equation}\label{eem}
			D_{mS}(A;\mathcal{X}) = \displaystyle\cfrac{1}{2}[\nu\{\alpha\in [0,1] : s_{A}(1,\alpha)\in [2,5]\}  + \nu\{\alpha\in [0,1] : s_{A}(-1,\alpha)\in [-4,-1]\}]
		\end{equation}
		and
		\begin{equation}\label{eef}
			D_{FS}(A;\mathcal{X}) = \min\{\nu\{\alpha\in [0,1] : s_{A}(1,\alpha)\in [2,5]\} , \nu\{\{\alpha\in [0,1] : s_{A}(-1,\alpha)\in [-4,-1]\}\}.
		\end{equation}
		We propose two cases:
		\begin{itemize}
			\item[(i)]  $R,G\in\mathcal{F}_{c}(\mathbb{R}^{p})$ such that $D_{mS}(R;\mathcal{X}) = D_{mS}(G;\mathcal{X})$ and $D_{FS}(R;\mathcal{X}) \neq D_{FS}(G;\mathcal{X});$ 		\item[(ii)]
			$R,G\in\mathcal{F}_{c}(\mathbb{R}^{p})$ such that $D_{mS}(R;\mathcal{X}) \neq D_{mS}(G;\mathcal{X})$ and $D_{FS}(R;\mathcal{X}) = D_{FS}(G;\mathcal{X})$. 		\end{itemize}
		The  example is illustrated in Figure
		\ref{Fa}, case (i) in the top row and case (ii) in the bottom row. There, the $\mathcal{X}_i,$  $i=1,2,$ are represented in black and, in each case, $R$ in red and $G$  in green.
		
		\begin{figure}[!htbp]
			\begin{center} 		\includegraphics[width=0.49\linewidth]{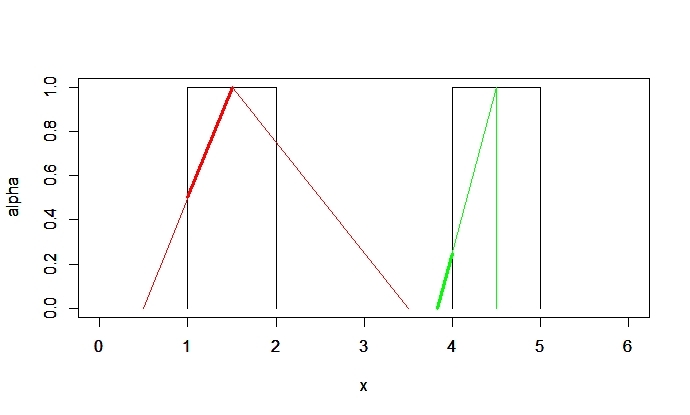} 		
				\includegraphics[width=0.49\linewidth]{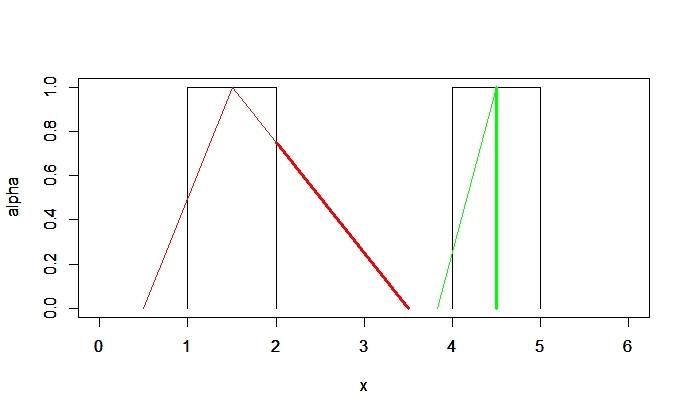} 		
				\includegraphics[width=0.49\linewidth]{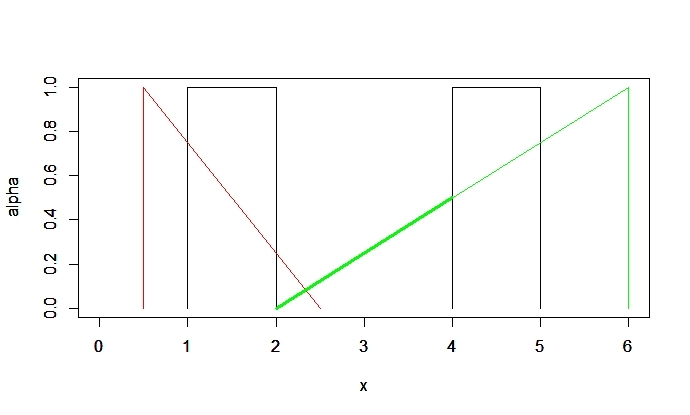} 		
				\includegraphics[width=0.49\linewidth]{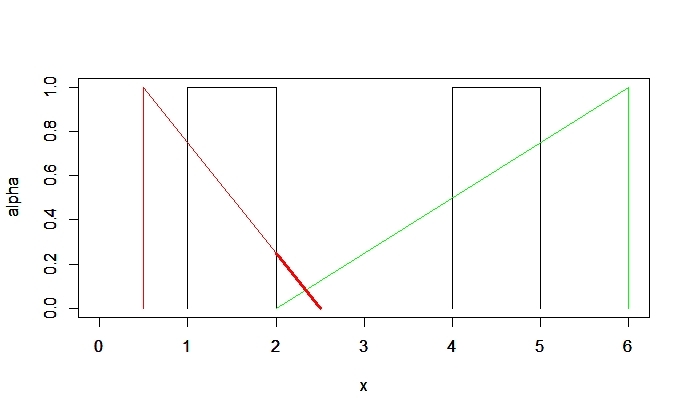} 	\end{center} 	\caption{Representation of Example \ref{Exfig}, part (i) in the top row and part (ii) in the bottom row. In each plot,  the fuzzy sets $\mathcal{X}_i$
				($i=1,2$) are represented in black, $R$ in red and $G$ in green. 	Thick lines indicate the parts of  $R$ and $G$  for which the corresponding support function is in the interval
				$[m_{\mathcal{X}}(u,\alpha),M_{\mathcal{X}}(u,\alpha)],$ with $u = -1$ in the left column  and  $u = 1$ in the right column. 	} 	\label{Fa}
		\end{figure}

		\begin{itemize}
			\item[(i)] Let $R, G\in\mathcal{F}_{c}(\mathbb{R}^{p})$ be defined, for any $t\in\mathbb{R},$ by 	\begin{eqnarray*}R(t) :=( t - 1/2)I_{ [1/2, 3/2]}(t)+( -t/2 + 7/4)I_{[3/2, 7/2]}(t)\\G(t) := (3t/2 - 23/4)I_{[23/6,
					9/2]}(t),\end{eqnarray*} 	Consequently,  $R_{\alpha} = [\alpha + 1/2, 7/2 - 2\cdot\alpha]$ and $G_{\alpha} = [(2/3)\cdot\alpha + 23/6, 9/2],$ $\alpha\in [0,1],$ are their $\alpha$-levels  and, for each $\alpha\in
			[0,1],$ \begin{eqnarray*}s_{R}(-1,\alpha) = -\alpha - 1/2\mbox{, } s_{R}(1,\alpha) = 7/2 - 2\cdot \alpha, \\ s_{G}(-1,\alpha) = -(2/3)\cdot\alpha - 23/6\mbox{ and } s_{G}(1,\alpha) = 9/2\end{eqnarray*}  are  their
			support functions.		
			
			To obtain the depth values, we first compute the Lebesgue measures of the $\alpha$'s for which these support functions belong to the intervals established in  \eqref{eem} and \eqref{eef}. We
			illustrate the computation with 	 the top row of Figure \ref{Fa}. In the left plot, the thick red line is  the part of the set $R$ for which $s_{R}(-1,\alpha)\in [-4,-1].$ This corresponds to $\alpha\in[.5,1]$ which
			results in a Lebesgue measure of 0.5. Meanwhile, in the right plot of Figure \ref{Fa}, the thick  red line is  the part of $R$ such that $s_{R}(1,\alpha)\in [2,5],$ which corresponds to $\alpha\in[0,.75],$ with Lebesgue
			measure .75. These measures add up to 5/4 with their minimum being 1/2. 
			
			Analogously, the thick green line in the left plot is
			the part of set $G$ for which $s_{G}(-1,\alpha)\in [-4,-1].$ This corresponds to $\alpha\in[0 , .25]$ which results in a Lebesgue measure of 0.25. In the right plot,
			the thick green line is the part of  $G$ such that $s_{G}(1,\alpha)\in [2,5].$ It corresponds to $\alpha\in[0 ,1],$ which results in a Lebesgue measure 1. These measures add again up to  5/4 but this time their minimum is 1/4.
			Thus, making use of  \eqref{eem} and \eqref{eef}, $$D_{mS}(R;\mathcal{X}) = D_{mS}(G;\mathcal{X}) = 5/8 \mbox{ and } D_{FS}(R;\mathcal{X}) = 1/2\neq 1/4 = D_{FS}(G;\mathcal{X}).$$ 	
			\item[(ii)] Let $R,
			G\in\mathcal{F}_{c}(\mathbb{R}^{p})$ be defined, for any $t\in\mathbb{R},$ by 	\begin{eqnarray*}R(t) := (-t/2 + 5/4)I_{ [1/2, 5/2]}(t) \\ G(t) := (t/4 - 1/2)I_{ [2, 6]}(t).\end{eqnarray*}   The corresponding
			$\alpha$-levels are $R_{\alpha} = [1/2, 5/2 - 2\cdot\alpha]$ and  $G_{\alpha} = [4\cdot\alpha + 2 , 6],$ $\alpha\in [0,1]$. Thus, for each $\alpha\in [0,1],$ we have the support functions $$s_{R}(-1,\alpha) = -
			1/2\mbox{, } s_{R}(1,\alpha) = 5/2 - 2\cdot \alpha\mbox{, } s_{G}(-1,\alpha) = -4\cdot\alpha - 2\mbox{ and } s_{G}(1,\alpha) = 6.$$ 	
			As in the previous case, we compute the Lebesgue measures of the $\alpha$'s for
			which these support functions belong to the intervals established in  \eqref{eem} and \eqref{eef}. This time, we clarify the computation making use of 	 the bottom row of Figure \ref{Fa}. As we can observe in the left
			plot, this time there is no thick red line, meaning that $s_{R}(-1,\alpha)\notin [-4,-1];$ and consequently the associated Lebesgue measure is 0. There is, however, a thick red line in the right plot, which coincides
			with  the part of $R$ such that $s_{R}(1,\alpha)\in [2,5].$ This corresponds to $\alpha\in[0 , .25],$ with a Lebesgue measure of .25. 	 For $G,$ things are kind of opposed.  $s_{G}(1,\alpha)\notin [2,5],$ which results in
			a 0 Lebesgue measure, and no thick green line in the bottom right plot of Figure \ref{Fa}. This time, for each  $\alpha\in[0 , .5]$ it is satisfied  that $s_{G}(-1,\alpha)\in [-4,-1].$ This results in a larger thick
			green line in the bottom left plot that results in a Lebesgue measure of .5. Thus, for $R$ and $G$ the minimum Lebesgue measure is 0. 	
			Taking into account  \eqref{ecuacionmSDFp}
			and \eqref{ecuacionMSDFp}, $$D_{mS}(R;\mathcal{X}) = 1/8 \neq 1/4 = D_{mS}(G;\mathcal{X}) \mbox{ and } D_{FS}(R;\mathcal{X}) =0 = D_{FS}(G;\mathcal{X}).$$
		\end{itemize}
	\end{example}
	
	This example shows the relevant differences and similarities  between $D_{mS}$ and $D_{FS}$. Let us comment them further, making use of the plots in   Figure \ref{Fa}. Focussing on case (i), top row plots, we have that $R$
	and $G$ take the same $D_{mS}$ depth value because the average  of the amount of $\alpha$'s corresponding to the thick red lines between the two plots is the same as the average corresponding to the thick green lines.
	However, non of those amounts is the same, which is depicted by $D_{FS},$ providing different depth values. It gives smaller depth value to $G$ because the amount of  $\alpha$'s corresponding to one of the thick green lines
	is the smallest among the four. In case (ii), bottom row plots, we have that $R$ and $G$ take the same $D_{FS}$ value because $R$ and $G$ result both in only one thick line each. $D_{mS}$ is able to depict a difference
	between $R$ and $G:$ that the thick line associated  to $G$ is larger than that associated to $R;$ giving then a higher depth value to $G.$ As commented before, the difference is due to the distinct way in which they
	summarize the information. One can argue that $D_{mS}$ is potentially better because it uses more information by computing the average. On the other hand, it can also be argued that $D_{FS}$ will extract the relevant
	information in certain problems.
	
	\section{Properties of $D_{mS}$, $D_{FS}$,  and $D_{nS}$}
	\label{properties}
	
	In this section, we will study whether the adaptations of simplicial depth to the fuzzy setting are semilinear and geometric depth functions in the sense of \citep{primerarticulo}.

	Theorem \ref{teoremamSDp} collects properties of the simplicial depth functions $D_{mS}$ and $D_{FS}$. Its proof is based on proofs of the simplicial band depth \cite[Theorems $1$ and $2$]{SimplicialGenton} and Proposition
	\ref{Tecuacion2MSDp}. The result is valid for $\mathcal H\subset C^{0}[\mathcal{F}_{c}(\mathbb{R}^{p})]$, namely fuzzy random variables all whose support functionals are continuous random variables.
	Note that, in order to define directly a notion of continuous fuzzy random variables, one would need first a reference measure with respect to which those variables would have a density function. In absence of such a
	measure (which would play the role of the Lebesgue measure in $\R^p$), the reduction to real random variables via the support function is more operative.

	\begin{proposition}\label{Tecuacion2MSDp}
		Let $\mathcal{X}\in L^{0}[\mathcal{F}_{c}(\mathbb{R}^{p})],$ $U\in\mathcal{F}_{c}(\mathbb{R}^{p})$  and $(u,\alpha)\in\mathbb{S}^{p-1}\times [0,1].$ Let $F_{u,\alpha}$ be the cumulative distribution function of the real
		random variable $s_{\mathcal{X}}(u,\alpha).$ Then 	\begin{equation*} 	\begin{aligned}
				\mathbb{P}(s_{U}(u,\alpha)\in [m_{\mathcal{X}}(u,\alpha), M_{\mathcal{X}}(u,\alpha&)]) = 1 - [1 - F_{u,\alpha}(s_{U}(u,\alpha))]^{p+1} \\ & \hspace{1cm}- [F_{u,\alpha}(s_{U}(u,\alpha))-\mathbb{P}(s_{\mathcal{X}}(u,\alpha) =
				s_{U}(u,\alpha))]^{p+1}.
			\end{aligned}
		\end{equation*} 	If, additionally, $\mathcal{X}\in C^{0}[\mathcal{F}_{c}(\mathbb{R}^{p})],$ that reduces to 		\begin{equation*}
			\mathbb{P}(s_{U}(u,\alpha)\in [m_{\mathcal{X}}(u,\alpha), M_{\mathcal{X}}(u,\alpha)]) = 1 - [1 - F_{u,\alpha}(s_{U}(u,\alpha))]^{p+1} - [F_{u,\alpha}(s_{U}(u,\alpha))]^{p+1}.
		\end{equation*}
	\end{proposition}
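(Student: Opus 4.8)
The plan is to compute the probability directly from the structure of $m_{\mathcal{X}}$ and $M_{\mathcal{X}}$ as the minimum and maximum of $p+1$ i.i.d. copies of the real random variable $s_{\mathcal{X}}(u,\alpha)$. Fix $(u,\alpha)$ and write $S_i := s_{\mathcal{X}_i}(u,\alpha)$ for the $p+1$ independent observations and $s := s_U(u,\alpha)$; then $m_{\mathcal{X}}(u,\alpha) = \min_i S_i$ and $M_{\mathcal{X}}(u,\alpha) = \max_i S_i$. The event $\{s \in [m_{\mathcal{X}}(u,\alpha), M_{\mathcal{X}}(u,\alpha)]\}$ is $\{\min_i S_i \le s \le \max_i S_i\}$. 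The natural first move is to pass to the complement and use inclusion–exclusion, since the complement splits into two disjoint events: either all the $S_i$ lie strictly above $s$ (so $s < \min_i S_i$), or all of them lie strictly below $s$ (so $s > \max_i S_i$). These two events are disjoint because they cannot both hold with $p+1 \ge 1$ observations.

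First I would write
\begin{equation*}
\mathbb{P}(s \in [m_{\mathcal{X}}(u,\alpha), M_{\mathcal{X}}(u,\alpha)]) = 1 - \mathbb{P}(s < \min_i S_i) - \mathbb{P}(s > \max_i S_i).
\end{equation*}
By independence and identical distribution, $\mathbb{P}(s < \min_i S_i) = \mathbb{P}(S_1 > s)^{p+1} = [1 - F_{u,\alpha}(s)]^{p+1}$, using that $\mathbb{P}(S_1 > s) = 1 - F_{u,\alpha}(s)$ since $F_{u,\alpha}$ is the cumulative distribution function of $S_1$. For the other term, $\mathbb{P}(s > \max_i S_i) = \mathbb{P}(S_1 < s)^{p+1}$, and here the care is that $\mathbb{P}(S_1 < s) = F_{u,\alpha}(s) - \mathbb{P}(S_1 = s)$, because $F_{u,\alpha}(s) = \mathbb{P}(S_1 \le s)$ includes the atom at $s$. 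This produces exactly the displayed general formula, with $\mathbb{P}(s_{\mathcal{X}}(u,\alpha) = s_U(u,\alpha))$ being the atom $\mathbb{P}(S_1 = s)$.

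For the second, simplified formula, I would invoke the extra hypothesis $\mathcal{X} \in C^{0}[\mathcal{F}_c(\mathbb{R}^p)]$, which by definition means $s_{\mathcal{X}}(u,\alpha)$ is a continuous real random variable for each $(u,\alpha)$; hence its distribution has no atoms, so $\mathbb{P}(S_1 = s) = 0$ and the bracketed term collapses to $F_{u,\alpha}(s)^{p+1}$. This yields $1 - [1 - F_{u,\alpha}(s)]^{p+1} - F_{u,\alpha}(s)^{p+1}$ as claimed.

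I do not expect any genuine obstacle here; the argument is elementary once the complement is decomposed into the two strict-inequality events. The only point demanding attention is the bookkeeping of the atom at $s$: one must use the \emph{strict} inequality $S_1 < s$ rather than $S_1 \le s$ in the upper-maximum term, so that the general statement correctly separates $F_{u,\alpha}(s)$ from the point mass $\mathbb{P}(S_1 = s)$. Conflating these would give the atom-free formula even in the general case, which would be wrong whenever $s_{\mathcal{X}}(u,\alpha)$ has positive probability of equalling $s_U(u,\alpha)$. Keeping that distinction straight is the whole substance of the proof.
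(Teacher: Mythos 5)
Your proof is correct and follows essentially the same route as the paper's: the paper defines the events $Q=\{m_{\mathcal{X}}(u,\alpha)\le s_{U}(u,\alpha)\}$ and $R=\{M_{\mathcal{X}}(u,\alpha)\ge s_{U}(u,\alpha)\}$, notes $\mathbb{P}(Q^{c}\cap R^{c})=0$ so that $\mathbb{P}(Q\cap R)=1-\mathbb{P}(Q^{c})-\mathbb{P}(R^{c})$, and then factorizes each complementary event by independence with exactly your strict-inequality bookkeeping, $\mathbb{P}(s_{\mathcal{X}_1}(u,\alpha)<s_U(u,\alpha))=F_{u,\alpha}(s_U(u,\alpha))-\mathbb{P}(s_{\mathcal{X}}(u,\alpha)=s_U(u,\alpha))$. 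Your attention to the atom at $s_U(u,\alpha)$ matches the paper's treatment, and the continuous case is handled identically.
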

	
	\begin{theorem}\label{teoremamSDp}
		\label{teoremaMSDp}
		When computed with respect to an $F$-symmetric random variable $\mathcal{X}\in C^{0}[\mathcal{F}_{c}(\mathbb{R}^{p})]$,   $D_{mS}(\cdot;\mathcal{X})$ and $D_{FS}(\cdot;\mathcal{X})$ satisfy P1, P2, P3a and P3b for the $\rho_{r}$ distances for any $r\in (1,\infty)$. 
	\end{theorem}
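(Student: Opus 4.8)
The plan is to reduce both functionals to a common scalar form and then exploit the shape of the resulting integrand. By Proposition \ref{Tecuacion2MSDp}, for $\mathcal{X}\in C^{0}[\mathcal{F}_{c}(\mathbb{R}^{p})]$ the inner probability appearing in \eqref{ecuacionmSDFp} and \eqref{ecuacionMSDFp} equals $g(F_{u,\alpha}(s_{A}(u,\alpha)))$, where $g(t):=1-(1-t)^{p+1}-t^{p+1}$ on $[0,1]$. A direct computation gives $g(t)=g(1-t)$ and $g''<0$, so $g$ is a strictly decreasing function of $|t-\tfrac12|$, attaining its maximum $1-2^{-p}$ only at $t=\tfrac12$ and vanishing at the endpoints. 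Hence $D_{mS}(A;\mathcal{X})=\int_{\mathbb{S}^{p-1}}\int_{[0,1]}g(F_{u,\alpha}(s_{A}(u,\alpha)))\,\dif\nu(\alpha)\,\dif\mathcal{V}_{p}(u)$ and $D_{FS}(A;\mathcal{X})=\inf_{u}\int_{[0,1]}g(F_{u,\alpha}(s_{A}(u,\alpha)))\,\dif\nu(\alpha)$, so each property can be attacked by controlling the integrand $g(F_{u,\alpha}(s_{A}(u,\alpha)))$ pointwise in $(u,\alpha)$ and then integrating or taking the infimum.

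For P2 and P3a I would argue pointwise. $F$-symmetry of $\mathcal{X}$ about $U$ makes $s_{\mathcal{X}}(u,\alpha)$ symmetric about $s_{U}(u,\alpha)$, and continuity of the support functionals forces $F_{u,\alpha}(s_{U}(u,\alpha))=\tfrac12$ (cf.\ \eqref{Amedian}). Since $g$ is maximized exactly at $\tfrac12$, the integrand for $U$ equals $1-2^{-p}$ everywhere and dominates that of any $B\in\mathcal{F}_{c}(\mathbb{R}^{p})$; integrating (resp.\ taking the infimum) gives $D_{mS}(U;\mathcal{X})=D_{FS}(U;\mathcal{X})=1-2^{-p}=\sup_{B}D(B;\mathcal{X})$, which is P2. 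For P3a, write $s_{(1-\lambda)U+\lambda B}=(1-\lambda)s_{U}+\lambda s_{B}$ via \eqref{soportesuma}; since $(1-\lambda)s_{U}(u,\alpha)+\lambda s_{B}(u,\alpha)$ lies between $s_{U}(u,\alpha)$ and $s_{B}(u,\alpha)$ and $F_{u,\alpha}$ is nondecreasing, the value $F_{u,\alpha}((1-\lambda)s_{U}+\lambda s_{B})$ lies between $\tfrac12$ and $F_{u,\alpha}(s_{B})$, hence is no farther from $\tfrac12$. As $g$ decreases with distance to $\tfrac12$, the integrand increases, giving $D((1-\lambda)U+\lambda B;\mathcal{X})\ge D(B;\mathcal{X})$ pointwise; the reverse inequality $D(U;\mathcal{X})\ge D((1-\lambda)U+\lambda B;\mathcal{X})$ is P2 again. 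Both hold for $D_{mS}$ and $D_{FS}$.

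For P1 I would use the transformation rules for support functions. Setting $v(u):=M^{T}u/\|M^{T}u\|$, equations \eqref{pa} and \eqref{soportesuma} yield $s_{M\cdot\mathcal{X}+B}(u,\alpha)=\|M^{T}u\|\,s_{\mathcal{X}}(v(u),\alpha)+s_{B}(u,\alpha)$, so, since $\|M^{T}u\|>0$ and $s_{B}(u,\alpha)$ is deterministic, the cumulative distribution functions satisfy $F^{M\cdot\mathcal{X}+B}_{u,\alpha}(s_{M\cdot C+B}(u,\alpha))=F_{v(u),\alpha}(s_{C}(v(u),\alpha))$; that is, the integrand at $(u,\alpha)$ for the transformed problem equals the integrand at $(v(u),\alpha)$ for the original one. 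For $D_{FS}$ the infimum is then unchanged because $u\mapsto v(u)$ is a bijection of $\mathbb{S}^{p-1}$, so P1 follows at once. For $D_{mS}$ the same substitution must be carried out inside the outer integral, and here lies what I expect to be the \emph{main obstacle}: the behaviour of the Haar measure $\mathcal{V}_{p}$ under $u\mapsto v(u)$. This map preserves $\mathcal{V}_{p}$ for similarity transformations ($M$ a scalar multiple of an orthogonal matrix) but distorts it otherwise, so the affine invariance of $D_{mS}$ has to be treated with this restriction in mind, consistent with the paper's convention that the listed properties are desiderata rather than strict axioms.

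Finally, P3b is where the restriction $r\in(1,\infty)$ is decisive. Writing $\rho_{r}(A',B')=\|s_{A'}-s_{B'}\|_{L^{r}(\mathbb{S}^{p-1}\times[0,1])}$, the hypothesis $\rho_{r}(U,C)=\rho_{r}(U,B)+\rho_{r}(B,C)$ is exactly the equality case of Minkowski's inequality for $f:=s_{U}-s_{B}$ and $h:=s_{B}-s_{C}$. Since $L^{r}$ is strictly convex for $1<r<\infty$, equality forces $h=\kappa f$ a.e.\ for some $\kappa\ge0$ (the degenerate cases $f=0$ or $h=0$ give $B=U$ or $B=C$ and are immediate). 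Rearranging gives $s_{B}=\mu\,s_{U}+(1-\mu)s_{C}$ with $\mu=\kappa/(1+\kappa)\in[0,1)$, and this holds everywhere on $\mathbb{S}^{p-1}\times[0,1]$ by continuity of support functions, so $B=\mu\cdot U+(1-\mu)\cdot C$ as fuzzy sets. Then P3a with deepest point $U$ yields $D(B;\mathcal{X})\ge D(C;\mathcal{X})$, while $D(U;\mathcal{X})\ge D(B;\mathcal{X})$ is P2, establishing P3b for both depths. The crux of the whole argument is precisely this reduction of the metric betweenness condition to an arithmetic convex combination through the Minkowski equality case, which is exactly what breaks down at $r=1$ and $r=\infty$.
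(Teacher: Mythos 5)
Your handling of P2 and P3a coincides with the paper's own proof: both arguments reduce the depths via Proposition \ref{Tecuacion2MSDp} to the composition of $F_{u,\alpha}$ with $t\mapsto 1-(1-t)^{p+1}-t^{p+1}$, use \eqref{Amedian} plus continuity of $s_{\mathcal{X}}(u,\alpha)$ to get $F_{u,\alpha}(s_{U}(u,\alpha))=1/2$ at the symmetry center, and exploit the monotonicity of that function on either side of $1/2$ pointwise in $(u,\alpha)$ before integrating (for $D_{mS}$) or taking the infimum (for $D_{FS}$); your explicit maximal value $1-2^{-p}$ is a correct byproduct. For P3b you take a genuinely different, self-contained route: the paper simply invokes \cite[Theorem 5.4]{primerarticulo}, which asserts the equivalence of P3a and P3b for $\rho_{r}$ with $r\in(1,\infty)$, whereas you re-derive the needed direction from the equality case of Minkowski's inequality in $L^{r}$, concluding $s_{B}=\mu s_{U}+(1-\mu)s_{C}$ and hence $B=\mu\cdot U+(1-\mu)\cdot C$ via \eqref{soportesuma}, after which P3a finishes. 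That is precisely the mechanism behind the cited result, and your version correctly isolates why $r=1$ and $r=\infty$ are excluded; the only step to tidy is the passage from $\mathcal{V}_{p}\otimes\nu$-a.e.\ equality of support functions to equality of fuzzy sets (continuity in $u$ for fixed $\alpha$, left-continuity and nestedness in $\alpha$, and the definition of the $0$-level close this).

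On P1 you diverge from the paper, and the divergence is substantive. For $D_{FS}$ your argument is the paper's: with $g(u)=M^{T}u/\|M^{T}u\|$, equations \eqref{pa} and \eqref{soportesuma} show the transformed integrand at $(u,\alpha)$ equals the original integrand at $(g(u),\alpha)$, and bijectivity of $g$ leaves the infimum over $\mathbb{S}^{p-1}$ unchanged. For $D_{mS}$, however, the paper asserts that the two subsets of $\mathbb{S}^{p-1}\times[0,1]$ involved are \emph{equal} ``as $g$ is a bijective map'' and concludes; in fact the transformed set is the preimage of the original one under $(u,\alpha)\mapsto(g(u),\alpha)$, so equality of their $\mathcal{V}_{p}\otimes\nu$-measures (and of the resulting expectations) requires $g$ to preserve the Haar measure, which holds when $M^{T}$ is a positive multiple of an orthogonal matrix but not for a general non-singular $M$. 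This is exactly the obstacle you flag, and it is not cosmetic: taking $p=2$, $\mathcal{X}=\text{I}_{\{X\}}$ with $X$ standard bivariate normal (which is $F$-symmetric and lies in $C^{0}[\mathcal{F}_{c}(\mathbb{R}^{2})]$), $A=\text{I}_{\{a\}}$ with $a\neq 0$, and $M=\mathrm{diag}(c,1/c)$, the transformed depth is $\int_{\mathbb{S}^{1}} h(g(u))\,\dif\mathcal{V}_{2}(u)$ with $h$ the original integrand, and as $c\to\infty$ the pushforward of $\mathcal{V}_{2}$ under $g$ concentrates at $\pm e_{1}$, so the integral tends to $h(e_{1})=3/4$ while $\int h\,\dif\mathcal{V}_{2}<3/4$ strictly when $a=e_{2}$; hence the two depth values differ for large $c$. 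So, measured against the statement as printed, your proposal is incomplete at P1 for $D_{mS}$ (you establish only invariance under similarity transformations), but the missing step is located precisely at the point the paper's own proof elides, and your restriction identifies a real limitation rather than a repairable omission in your argument.
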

	%\begin{remark}The hypothesis are used in proving P2 and P3a\end{remark}
	
	In general, $D_{mS}$ and $D_{FS}$  violate P4a, as shown by the following example. They also violate P4b, since P4b implies P4a \cite[Proposition $5.8$]{primerarticulo}.
	\begin{example}\label{ejemplosimplicialP4}
		Let $(\{\omega_{1},\omega_{2}\}, \mathcal{P}(\{\omega_{1},\omega_{2}\}),\mathbb{P})$ be a probability space such that $\mathbb{P}(\omega_{1}) = \mathbb{P}(\omega_{2})$ and   $$\mathcal{X} :
		\{\omega_{1},\omega_{2}\}\rightarrow\mathcal{F}_{c}(\mathbb{R}) \mbox{ with } \mathcal{X}(\omega_{1}) = \text{I}_{\{1\}} \mbox{ and } \mathcal{X}(\omega_{2}) = \text{I}_{\{-1\}}.$$
		It is clear that $\mathcal{X}$ is $F$-symmetric with respect to $A = \text{I}_{\{0\}}$. Let
		$B\in\mathcal{F}_{c}(\mathbb{R})$ such that, for any $t\in\mathbb{R},$ 	\begin{eqnarray*}B(t) := (-t/2 + 1/2)I_{(0,1]}(t)+I_{\{0\}}(t). 	\end{eqnarray*} 	 Thus, we have that $$B_{\alpha} = [0, 1-2\alpha] \mbox{ for
		}\alpha\in [0,1/2] \mbox{  and }B_{\alpha} = \{0\} \mbox{ for any } \alpha\in [1/2,1].$$ Additionally, $$s_{B}(-1,\alpha) = 0\mbox{ for all }\alpha\in [0,1]\mbox{ and }s_{B}(1,\alpha) = 0\mbox{ for all }\alpha\in [1/2,1].$$
		Taking into account the definition of $D_{FS}$, we have that,  for all $n\in \mathbb{N},$ $$D_{FS}(A + n\cdot B;\mathcal{X}) \geq 1/2; \mbox{   consequently, } \lim_{n\rightarrow\infty} D_{FS}(A + n\cdot B;\mathcal{X}) >
		0.$$	Analogously, we have that $D_{mS}(A+n\cdot B;\mathcal{X})\geq 1/2$ for all $n\in\mathbb{N}$, thus $$\lim_{n\rightarrow\infty}D_{mS}(A+n\cdot B;\mathcal{X}) > 0.$$
	\end{example}
	In property P4a we study sequences of fuzzy sets of the form $\{A + n\cdot B\}_{n}$. By restricting the selection of the fuzzy set $B$ to the  family of fuzzy sets satisfies P4a
	$$\begin{aligned}
		\mathfrak{B}:=\{B\in\mathcal{F}_{c}(\mathbb{R}^{p}): \forall u\in\mathbb{S}^{p-1},\ & \exists C_{u}\subseteq [0,1]\mbox{ with }\nu(C_{u}) = 1 \\ & \mbox{ such that } s_{B}(u,\alpha)\neq 0 \mbox{ } \forall \alpha\in C_{u}
		\},
	\end{aligned}
	$$
	the following result holds for $D_{FS}$ and $D_{mS},$  which is in line with property P4a. 	Property P4b, however, considers a general sequence of fuzzy sets $\{A_{n}\}_{n},$ not allowing for this type of adaptation.
	
	\begin{proposition}\label{proposition1BMSDp}
		For any $\mathcal{X}\in L^{0}[\mathcal{F}_{c}(\mathbb{R}^{p})]$ and $B\in\mathfrak{B},$ we have that 	\begin{itemize}
			\item	$\lim_{n} D_{FS}(A + n\cdot B;\mathcal{X}) = 0,$   with $A\in\mathcal{F}_{c}(\mathbb{R}^{p})$ maximizing $D_{FS}(\cdot;\mathcal{X}).$
			\item  $\lim_{n} D_{mS}(A + n\cdot B;\mathcal{X}) = 0,$  with $A\in\mathcal{F}_{c}(\mathbb{R}^{p})$ maximizing $D_{mS}(\cdot;\mathcal{X}).$
		\end{itemize}
	\end{proposition}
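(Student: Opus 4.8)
The plan is to show that, along the sequence $A+n\cdot B$, the integrand appearing in the Fubini expressions \eqref{ecuacionmSDFp} and \eqref{ecuacionMSDFp} converges to $0$ for almost every parameter, and then to pass the limit through the integrals by dominated convergence. The starting point is the additivity of the support function \eqref{soportesuma}, which gives $s_{A+n\cdot B}(u,\alpha)=s_{A}(u,\alpha)+n\cdot s_{B}(u,\alpha)$ for every $(u,\alpha)$. Since $A\in\mathcal{F}_{c}(\mathbb{R}^{p})$ has compact levels, $s_{A}(u,\alpha)$ is finite, so the membership $B\in\mathfrak{B}$ guarantees that, for each fixed $u$ and every $\alpha$ in the full-measure set $C_{u}$, the quantity $s_{A+n\cdot B}(u,\alpha)$ tends to $+\infty$ (if $s_{B}(u,\alpha)>0$) or to $-\infty$ (if $s_{B}(u,\alpha)<0$) as $n\to\infty$.

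The core computation uses Proposition \ref{Tecuacion2MSDp} with $U=A+n\cdot B$. Writing $F_{u,\alpha}$ for the distribution function of $s_{\mathcal{X}}(u,\alpha)$ and noting that $F_{u,\alpha}(t)-\mathbb{P}(s_{\mathcal{X}}(u,\alpha)=t)=\mathbb{P}(s_{\mathcal{X}}(u,\alpha)<t)$ equals the left limit $F_{u,\alpha}(t^{-})$, the proposition yields
\begin{equation*}
\mathbb{P}\!\left(s_{A+n\cdot B}(u,\alpha)\in[m_{\mathcal{X}}(u,\alpha),M_{\mathcal{X}}(u,\alpha)]\right)=1-[1-F_{u,\alpha}(t_{n})]^{p+1}-[F_{u,\alpha}(t_{n}^{-})]^{p+1},
\end{equation*}
with $t_{n}:=s_{A+n\cdot B}(u,\alpha)$. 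For $\alpha\in C_{u}$, letting $t_{n}\to+\infty$ forces $F_{u,\alpha}(t_{n})\to 1$ and $F_{u,\alpha}(t_{n}^{-})\to 1$, so the right-hand side tends to $1-0-1=0$; letting $t_{n}\to-\infty$ forces the two terms to $0$ and $1$ respectively, giving $1-1-0=0$. Hence the integrand converges to $0$ for $\nu$-almost every $\alpha$, for each fixed $u$.

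Because each such integrand is bounded by $1$ and $\nu$ is a finite measure, dominated convergence gives, for every fixed $u\in\mathbb{S}^{p-1}$,
\begin{equation*}
\lim_{n}\int_{[0,1]}\mathbb{P}\!\left(s_{A+n\cdot B}(u,\alpha)\in[m_{\mathcal{X}}(u,\alpha),M_{\mathcal{X}}(u,\alpha)]\right)\dif\nu(\alpha)=0.
\end{equation*}
For $D_{FS}$, the value $D_{FS}(A+n\cdot B;\mathcal{X})$ is the infimum over $u$ of these inner integrals by \eqref{ecuacionMSDFp}; being nonnegative and bounded above by the inner integral at any single fixed direction, it is squeezed to $0$. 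For $D_{mS}$, a second application of dominated convergence---now in $u$, with the bounded-by-$1$ inner integrals and the finite Haar measure $\mathcal{V}_{p}$---passes the limit through the outer integral in \eqref{ecuacionmSDFp}, yielding $D_{mS}(A+n\cdot B;\mathcal{X})\to 0$.

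I expect the only delicate points to be bookkeeping rather than substance: verifying that the a.e.-in-$\alpha$ pointwise limit is unaffected by whether $s_{B}(u,\cdot)$ keeps a constant sign (it is monotone in $\alpha$, but each individual value in $C_{u}$ has a definite nonzero sign, and both signs drive the integrand to $0$), and confirming the joint measurability in $(u,\alpha)$ needed for the Fubini form and for the dominated convergence steps---this last already secured by the well-definedness arguments of Section \ref{proofs}. I also note that the hypothesis that $A$ maximizes the depth is never used; the argument applies verbatim to any $A\in\mathcal{F}_{c}(\mathbb{R}^{p})$, the maximizer being singled out only to match the formulation of property P4a.
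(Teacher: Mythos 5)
Your proof is correct and follows essentially the same route as the paper's: support-function additivity \eqref{soportesuma}, Proposition \ref{Tecuacion2MSDp} applied with $U=A+n\cdot B$, the sign dichotomy on the full-measure set $C_{u}$ coming from $B\in\mathfrak{B}$, dominated convergence in $\alpha$ for a fixed direction, and the squeeze through the infimum for $D_{FS}$ (the paper dismisses $D_{mS}$ as ``analogous'', where you make the second dominated-convergence step in $u$ explicit). If anything you are slightly more careful than the paper's own display \eqref{inT}, which quotes the $C^{0}$ form of Proposition \ref{Tecuacion2MSDp} without the atom term even though $\mathcal{X}\in L^{0}[\mathcal{F}_{c}(\mathbb{R}^{p})]$; your identification of $F_{u,\alpha}(t)-\mathbb{P}(s_{\mathcal{X}}(u,\alpha)=t)$ with the left limit $F_{u,\alpha}(t^{-})$, which still tends to $1$ or $0$ along $t_{n}\to\pm\infty$, closes that small gap, and your observation that the maximality of $A$ is never used is likewise consistent with the paper's argument.
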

	
	The following result is for $D_{nS}$.
	\begin{theorem}\label{SimplicialSDp}
		For any $\mathcal{X}\in L^{0}[\mathcal{F}_{c}(\mathbb{R}^{p})],$ $D_{nS}(\cdot;\mathcal{X})$   satisfies P1, P4a and P4b 
		for the $d_r$ distances for any $r\in[1,\infty]$ 
		and for the $\rho_r$ distances for any
		$r\in[1,\infty).$
	\end{theorem}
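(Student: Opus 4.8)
The plan is to carry out everything in the support-function form \eqref{expressionSDF1}, namely $D_{nS}(A;\mathcal{X}) = \mathbb{P}\big(s_A(u,\alpha)\in[m_{\mathcal{X}}(u,\alpha),M_{\mathcal{X}}(u,\alpha)]\ \text{for all }(u,\alpha)\big)$, taking for granted the well-definedness (measurability) established in Section \ref{proofs}. The three properties are then handled separately, with P4b and P4a both reduced to a single boundedness observation.

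For P1 I would fix a non-singular $M\in\mathcal{M}_{p\times p}(\mathbb{R})$ and $B,C\in\mathcal{J}$, condition on the observations $\mathcal{X}_1,\ldots,\mathcal{X}_{p+1}$, and show that the pseudosimplex-membership event is literally unchanged by the affine map. Writing $w(u) := M^{T}u/\|M^{T}u\|$, formulae \eqref{soportesuma} and \eqref{pa} give $s_{M\cdot C+B}(u,\alpha) = \|M^{T}u\|\,s_C(w(u),\alpha)+s_B(u,\alpha)$ and likewise $s_{M\cdot\mathcal{X}_i+B}(u,\alpha) = \|M^{T}u\|\,s_{\mathcal{X}_i}(w(u),\alpha)+s_B(u,\alpha)$. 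Since $t\mapsto \|M^{T}u\|\,t+s_B(u,\alpha)$ is strictly increasing, it commutes with $\min_i$ and $\max_i$, so after subtracting $s_B(u,\alpha)$ and dividing by $\|M^{T}u\|>0$ the constraint at $(u,\alpha)$ becomes exactly $s_C(w(u),\alpha)\in[m_{\mathcal{X}}(w(u),\alpha),M_{\mathcal{X}}(w(u),\alpha)]$. Because $M^{T}$ is invertible, $u\mapsto w(u)$ is a bijection of $\mathbb{S}^{p-1}$ (injectivity and surjectivity follow directly), so the conjunction over all $(u,\alpha)$ is precisely the membership of $C$ in $S_F[\mathcal{X}_1,\ldots,\mathcal{X}_{p+1}]$. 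Noting that the i.i.d.\ observations of $M\cdot\mathcal{X}+B$ may be taken to be $M\cdot\mathcal{X}_i+B$, equality of the two events yields $D_{nS}(M\cdot C+B;M\cdot\mathcal{X}+B)=D_{nS}(C;\mathcal{X})$.

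For P4b the key step is a boundedness lemma. Set $R := \max_{i}\sup_{(u,\alpha)}|s_{\mathcal{X}_i}(u,\alpha)|$, which is almost surely finite because each $\mathcal{X}_i(\omega)$ has compact $0$-level. On the event $E_n := \{A_n\in S_F[\mathcal{X}_1,\ldots,\mathcal{X}_{p+1}]\}$ every constraint forces $|s_{A_n}(u,\alpha)|\le R$, hence $d_\infty(A_n,\text{I}_{\{0\}}) = \sup_{(u,\alpha)}|s_{A_n}(u,\alpha)|\le R$; by the triangle inequality $d_\infty(A_n,A)\le R+c_A$ with $c_A := d_\infty(A,\text{I}_{\{0\}})<\infty$. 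Therefore $E_n\subseteq\{R\ge d_\infty(A_n,A)-c_A\}$ and $D_{nS}(A_n;\mathcal{X})\le\mathbb{P}\big(R\ge d_\infty(A_n,A)-c_A\big)$. Now I would reduce every metric to $d_\infty$: since $\nu$ is a probability measure on $[0,1]$ one has $d_r\le d_\infty$, so $d_r(A_n,A)\to\infty$ forces $d_\infty(A_n,A)\to\infty$; and by the stated (Lipschitz) equivalence of $\rho_r$ and $d_r$ together with $d_r\le d_\infty$, the same conclusion holds when $\rho_r(A_n,A)\to\infty$ for $r\in[1,\infty)$. As $d_\infty(A_n,A)-c_A\to\infty$ and $R<\infty$ almost surely, the right-hand bound tends to $0$, proving P4b for all $d_r$, $r\in[1,\infty]$, and all $\rho_r$, $r\in[1,\infty)$. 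Property P4a then follows from the same lemma, since for $B\neq\text{I}_{\{0\}}$ equation \eqref{soportesuma} gives $d_\infty(A+\lambda\cdot B,A)=\lambda\,d_\infty(B,\text{I}_{\{0\}})\to\infty$, whence $D_{nS}(A+\lambda\cdot B;\mathcal{X})\to0$.

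I expect the main subtlety to lie not in any delicate estimate but in two bookkeeping points. First, in P1 one must verify that $u\mapsto M^{T}u/\|M^{T}u\|$ is a genuine bijection of $\mathbb{S}^{p-1}$, for it is exactly this that preserves the quantifier ``for all $(u,\alpha)$'' and hence affine invariance. Second, the metric comparisons must be used in the correct direction, and ``equivalent'' for $\rho_r$ and $d_r$ must be read in the two-sided constant sense needed to transfer divergence from $\rho_r$ to $d_r$ and thence to $d_\infty$. The genuinely probabilistic content --- that an almost surely finite $R$ satisfies $\mathbb{P}(R\ge t)\to0$ as $t\to\infty$ --- is routine once the containment $E_n\subseteq\{R\ge d_\infty(A_n,A)-c_A\}$ is established.
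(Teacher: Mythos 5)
Your proof is correct, and while your treatment of P1 coincides with the paper's (the paper simply refers back to the P1 argument of Theorem \ref{teoremamSDp}, which is exactly your change of variables $u\mapsto M^{T}u/\|M^{T}u\|$ via \eqref{pa} and \eqref{soportesuma}), your route to P4b and P4a is genuinely different. The paper invokes \cite[Proposition 8.3]{primerarticulo} to extract a single pair $(u_{0},\alpha_{0})$ with $|s_{A_{n}}(u_{0},\alpha_{0})|\to\infty$, bounds $D_{nS}(A_{n};\mathcal{X})$ by the probability of that one constraint, and drives it to $0$ through the explicit CDF formula of Proposition \ref{Tecuacion2MSDp}; P4a is then obtained from P4b by citing \cite[Proposition 5.8]{primerarticulo}. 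You replace both external citations by a self-contained tightness bound: the random pseudosimplex has $d_{\infty}$-radius at most the almost surely finite $R$, so membership forces $d_{\infty}(A_{n},A)\le R+c_{A}$, divergence in any admissible metric transfers to $d_{\infty}$, and P4a follows directly from $d_{\infty}(A+\lambda\cdot B,A)=\lambda\, d_{\infty}(B,\text{I}_{\{0\}})$. Your version is more elementary and makes it transparent that nothing beyond $\mathcal{X}\in L^{0}[\mathcal{F}_{c}(\mathbb{R}^{p})]$ is used; the paper's version buys uniformity with the P4 arguments for $D_{mS}$ and $D_{FS}$ (reusing Proposition \ref{Tecuacion2MSDp}) and delegates the metric comparisons to results already proved elsewhere.

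Two small repairs are advisable. First, do not appeal to a ``two-sided (Lipschitz) equivalence'' of $\rho_{r}$ and $d_{r}$: a bound of the form $d_{r}\le C\rho_{r}$ is in fact false in general (support functions of convex bodies can differ only on a small set of directions), and, more to the point, you do not need it. The only direction your argument uses is $\rho_{r}\le d_{r}$, which holds with constant $1$ because $d_{\mathcal{H}}(A_{\alpha},B_{\alpha})=\sup_{u\in\mathbb{S}^{p-1}}|s_{A}(u,\alpha)-s_{B}(u,\alpha)|$ and $\mathcal{V}_{p}$ is a probability measure; state that inequality explicitly instead of citing ``equivalence.'' Second, verify that $R$ is measurable: by nestedness of the $\alpha$-levels, $R=\max_{i}\sup_{x\in(\mathcal{X}_{i})_{0}}\|x\|$, which is a random variable because each $0$-level is a random compact set (alternatively, restrict the supremum to a countable dense set using left continuity of $s_{\mathcal{X}_{i}}(u,\cdot)$, as in the paper's measurability proof for $D_{nS}$). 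With these phrasings fixed, your argument stands as a complete alternative proof.
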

	For property P2, intuitively, the notion of symmetry to be considered would make use of the central symmetry of the support function of a fuzzy set in every $u\in\mathbb{S}^{p-1}$ and $\alpha\in [0,1]$. It is apparent that the relation between this tentative notion of symmetry  and the notion of a fuzzy simplex is $F$-symmetry. As regards properties P3a and P3b, already in the multivariate case the simplicial depth does not
	generally satisfy the analog property M3. Because of these reasons and since naive simplicial fuzzy depth is not one of our recommended fuzzy depth, we do not pursue these properties further.
	
	\section{Empirical simplicial depths}
	\label{datasimulation}
	Given $ \mathcal{H}\subseteq L^{0}[\mathcal{F}_{c}(\mathbb{R}^{p})],$  let   $\mathcal{X}\in\mathcal{H}$ be a fuzzy random variable and
	$\mathcal{X}_{1},\ldots ,\mathcal{X}_{n}$ be independent  random variables  distributed as $\mathcal{X}.$ Let $\mathfrak{X}$ be a fuzzy random variable corresponding to the empirical distribution associated to
	$\mathcal{X}_{1},\ldots ,\mathcal{X}_{n}$. That is, $\mathfrak X$ takes on as values the observed values $\mathcal X_1(\omega),\ldots,\mathcal X_n(\omega)$ (possibly repeated) with probability $n^{-1}$. The simplicial depths associated with this empirical distribution are the empirical or sample simplicial depths.
	
	In Subsection \ref{Emp}, we provide the explicit definitions for the case of $\mathcal{F}_{c}(\mathbb{R})$ in order to illustrate subsequently the behavior of our three proposals. For ease of comparison with Tukey depth, we
	use in Subsection \ref{Real} the same dataset in \cite{primerarticulo}. The behaviour is similar, which is interesting since that distribution is not from $C^{0}[\mathcal{F}_{c}(\mathbb{R}^{p})]$ as assumed by some of our
	theoretical results (Theorem \ref{teoremamSDp}). In order to illustrate the case of fuzzy random variables with continuously distributed support functionals, we generate in Section \ref{Simu} a synthetic sample from a fuzzy
	random variable in $C^{0}[\mathcal{F}_{c}(\mathbb{R}^{p})]$.

	\subsection{Empirical definitions for $\mathcal{F}_{c}(\mathbb{R})$}\label{Emp}

	From  \eqref{ecuacionmSDFp}  and \eqref{ecuacionMSDFp}, $D_{mS}$ and $D_{FS}$ have in common that both involve computing the  function
	$$F_A(u):=\int_{[0,1]} \mathbb{P}(s_{A}(u,\alpha)\in [m_{\mathcal{X}}(u,\alpha),M_{\mathcal{X}}(u,\alpha)]) \dif\nu(\alpha),$$ with $u\in\mathbb{S}^{0}=\{-1,1\}.$
	The difference lies  in the operator over $\mathbb{S}^{0}$ applied to $F_A:$ the average (for $D_{mS}$) and the infimum (for $D_{FS}$). %
	Then, to establish our proposals of emprirical simplicial and modified simplicial fuzzy depth, making use of $\mathcal{X}_{1},\ldots ,\mathcal{X}_{n},$ we calculate $F_A(u)$ for the fuzzy random variable $\mathfrak X$ as
	$\binom{n}{2}^{-1}L^A_n(u)$ with
	\begin{equation}\label{L}
		L^A_n(u):=\sum_{i = 1}^{n}\sum_{j\geq i}^{n}L_{i,j,u}^{A}
	\end{equation}
	and %, for each pair $(\mathcal{X}_{i}, \mathcal{X}_{j})$ with $j\geq i,$ $i,j\in\{1, \ldots, n\}$ and $u\in\mathbb{S}^{0},$
	\begin{equation}\label{ecuacionLiju}
		L_{i,j,u}^{A} := \nu(\{\alpha\in [0,1] : s_{A}(u,\alpha)\in [\min\{s_{\mathcal{X}_{i}}(u,\alpha), s_{\mathcal{X}_{j}}(u,\alpha)\},\max\{s_{\mathcal{X}_{i}}(u,\alpha), s_{\mathcal{X}_{j}}(u,\alpha)\}]  \}).
	\end{equation}
	
	Then, the modified simplicial fuzzy depth based on $\mathcal{J}\subseteq\mathcal{F}_{c}(\mathbb{R})$ of a fuzzy set $A\in\mathcal{J}$
	with respect to
	$\mathfrak{X}$ is
	\begin{equation}\label{definicionDmS}
		D_{mS}(A;\mathfrak{X}) = \int_{\mathbb{S}^{0}}\binom{n}{2}^{-1}L_{n}^{A}(u) \dif\mathcal{V}_{1}(u) = 2^{-1}\binom{n}{2}^{-1}\left[L_{n}^{A}(1) + L_{n}^{A}(-1)\right]
	\end{equation}
	and the  simplicial fuzzy depth based on $\mathcal{J}\subseteq\mathcal{F}_{c}(\mathbb{R})$  of a fuzzy set $A\in\mathcal{J}$
	with respect to  $\mathfrak{X}$ is
	\begin{equation}\label{definicionDFS}
		D_{FS}(A;\mathfrak{X}) = \inf_{u\in\mathbb{S}^{0}}\binom{n}{2}^{-1}L_{n}^{A}(u) 		= \binom{n}{2}^{-1}\min\{L_{n}^{A}(1),L_{n}^{A}(-1)\}.
	\end{equation}
	Similarly, the  naive simplicial fuzzy  depth based on $\mathcal{J}\subseteq\mathcal{F}_{c}(\mathbb{R})$  of a fuzzy set $A\in\mathcal{J}$ with respect to  $\mathfrak{X}$ is
	\begin{equation}\label{ecuacionDnS}
		D_{nS}(A;\mathfrak{X}) = \cfrac{1}{\binom{n}{2}}\sum_{i=1}^{n}\sum_{j\geq i}^{n}I_{i,j}^{A},
	\end{equation}
	where $I_{i,j}^{A}$ equals 1 if  $s_{A}(u,\alpha)\in [\min\{s_{\mathcal{X}_{i}}(u,\alpha),s_{\mathcal{X}_{j}}(u,\alpha)\},\max\{s_{\mathcal{X}_{i}}(u,\alpha),s_{\mathcal{X}_{j}}(u,\alpha\}]$  for every $(u,\alpha)\in\mathbb{S}^{0}\times [0,1]$, and  0 otherwise.
	
	\subsection{Simulated data}\label{Simu}
	
	We draw a sample ($n=100$) from a fuzzy random variable in $C^{0}[\mathcal{F}_{c}(\mathbb{R}^{p})]$. For that, we make use of a random variable whose realizations are trapezoidal fuzzy sets. 
	To construct the fuzzy random variable, we follow the method in \cite{SinovaMedian}. Let  $X_{1}, X_{2}, X_{3}, X_{4}$ be independent and continuous real-valued
	random variables. Let $X_{1}$ be normally distributed with zero mean and standard deviation 10, whereas $X_{2}, X_{3}, X_{4}$ are chi-squared distributions with 1 degree of freedom. Set
	\begin{equation}\label{Xsim}
		\mathcal{X} = \mbox{Tra}(X_{1} - X_{2} - X_{3}, X_{1} - X_{2},X_{1} + X_{2}, X_{1} + X_{2} + X_{4})
	\end{equation}
	which is well-defined since $X_2,X_3,X_4\ge 0$. By construction,
	$$s_{\mathcal X}(-1,\alpha)=-(X_1-X_2-(1-\alpha)X_3)$$
	and
	$$s_{\mathcal X}(1,\alpha)=X_1+X_2+(1-\alpha)X_4,$$
	which are continuous
	variables for each $\alpha\in[0,1]$. Accordingly, $\mathcal{X}\in C^{0}[\mathcal{F}_{c}(\mathbb{R})]$ as required by Theorem \ref{teoremamSDp}.
	
	The choice of the $\chi_1^2$ distribution for $X_3,X_4$ is because it is very skewed (Pearson coefficient: $2\sqrt 2$). That allows us to realize how the depth is affected not just by the location of the core of the trapezoidal
	fuzzy set but also by the slopes of its sides.
	
	To illustrate the performance of the different depth functions, let $\mathcal{X}_{1},\ldots ,\mathcal{X}_{100}$ be independent copies distributed as $\mathcal{X}.$ With some abuse of notation, for $i=1, \ldots, 100,$ each
	$\mathcal{X}_i$ will also denote the observed trapezoidal fuzzy set, represented in each of the plots of Figure \ref{figuracontinuasimplicial}. Thus, we illustrate the performance of each of our three proposals by
	computing,  for $i=1, \ldots, 100,$ each of the depths of $\mathcal{X}_i$ with respect to the corresponding empirical fuzzy random variable $\mathfrak{X}.$ 
	Naive simplicial depth $D_{nS}$ is illustrated in the top row of Figure \ref{figuracontinuasimplicial}, modified simplicial depth $D_{mS}$ in the middle row, and simplicial fuzzy depth $D_{FS}$ in the bottom row. The plots in the first column of Figure
	\ref{figuracontinuasimplicial} represent the five trapezoidal fuzzy values having the largest depth values. These are colored from red (highest depth) to yellow (high depth) and the rest of the
	100 in grey. A zoom of each of these plots highlighting the deepest sets is in the central column of the figure.

	\begin{figure}[htbp]
		\begin{center} 		\includegraphics[width=0.32\linewidth]{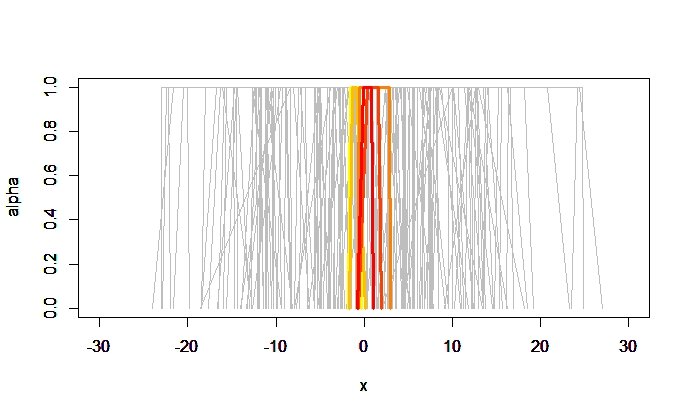} 	
			\includegraphics[width=0.32\linewidth]{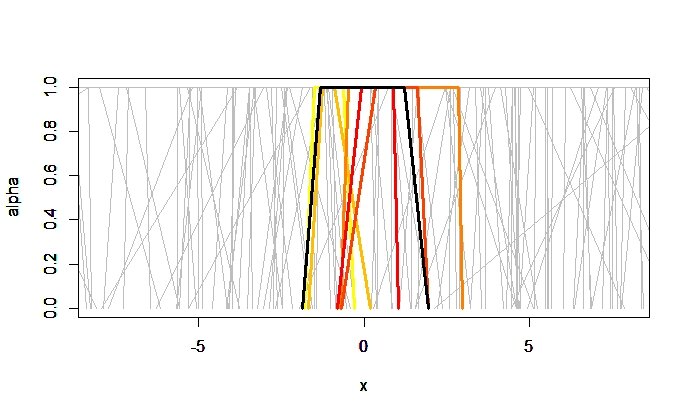} 	
			\includegraphics[width=0.32\linewidth]{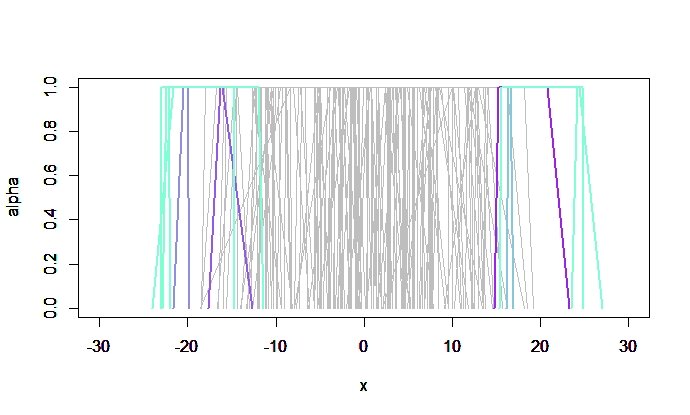} 	
			\includegraphics[width=0.32\linewidth]{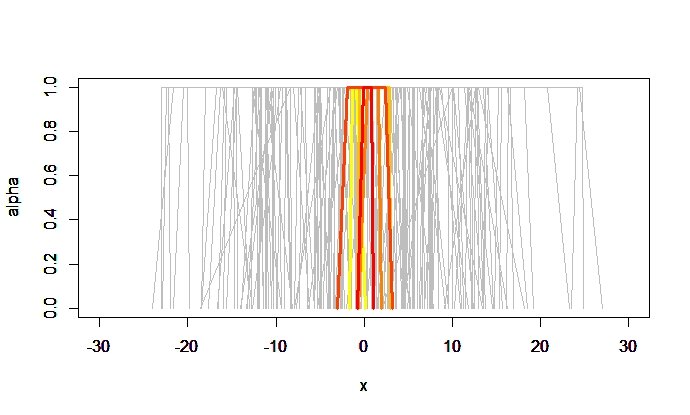} 	
			\includegraphics[width=0.32\linewidth]{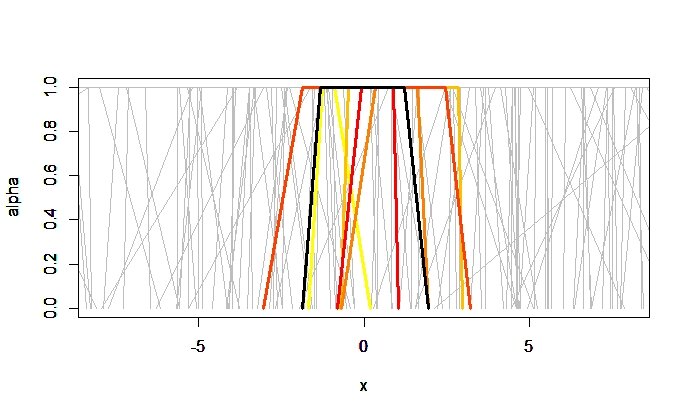} 	    
			\includegraphics[width=0.32\linewidth]{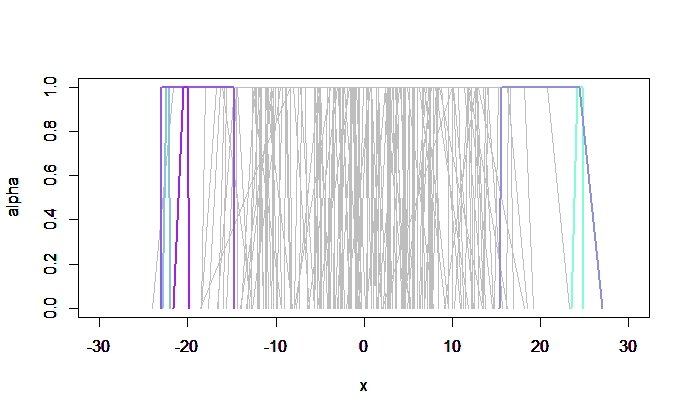} 	
			\includegraphics[width=0.32\linewidth]{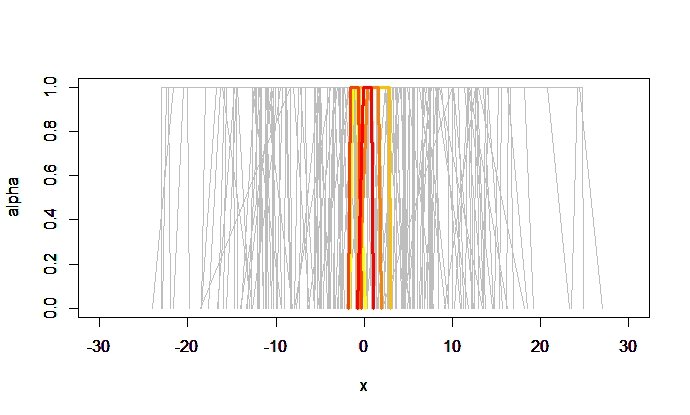} 	 
			\includegraphics[width=0.32\linewidth]{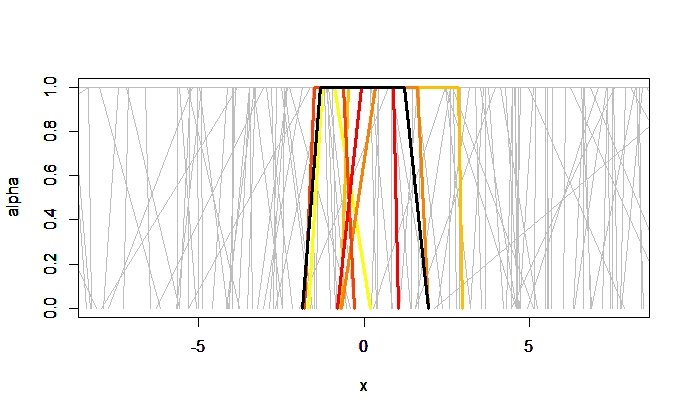} 	 
			\includegraphics[width=0.32\linewidth]{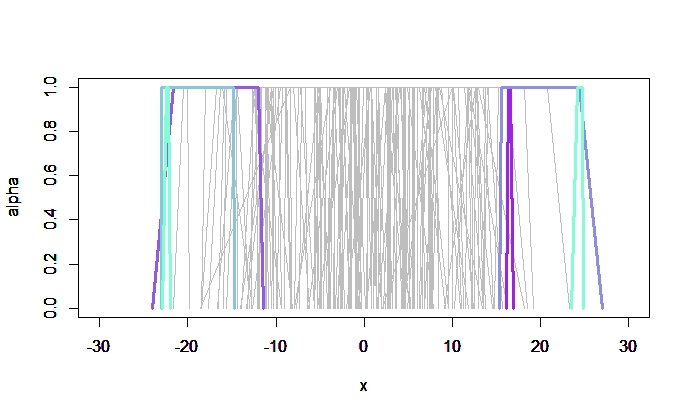}
		\end{center} 	\caption{ 	Illustration of the empirical naive simplicial fuzzy  depth, $D_{nS},$ (top row), the empirical modified simplicial fuzzy depth, $D_{mS},$ (middle row) and the empirical simplicial fuzzy depth,
			$D_{FS,}$ (bottom row) over a sample of trapezoidal fuzzy sets of  size 100 drawn from $\mathcal{X}$ in \eqref{Xsim}. The sample is plotted in grey. The color  in the first and second column plots represent the trapezoidal
			fuzzy sets in the sample corresponding to the 5 larger depth values, with the second column being a  zoom of the first in the interval $[-8,8];$ in order to better observe  the different depth values. Colors range from red
			(highest depth) to yellow (high depth) in the first column. In addition, in the second column the median fuzzy set is highlighted in black. The third column represents the trapezoidal fuzzy sets with the 5 minimal depth
			values for the same depth functions. 		Depth values are shown through the colors, which range from aqua marine blue (lowest depth) to violet (low depth). 		} 	\label{figuracontinuasimplicial}
	\end{figure}

	We also represent, plotted in black in the central column of Figure \ref{figuracontinuasimplicial}, the median fuzzy set, $M$, with respect to the sample $\mathcal{X}_{1},\ldots ,\mathcal{X}_{100}$. Denoting $\mathcal{X}_{i}
	= \text{Tra}(a_{i},b_{i},c_{i},d_{i})$ for every $i\in\{1,\ldots ,100\}$, the median fuzzy set is defined as
	$$M := \text{Tra}(\text{Med}(a_{1}, \ldots, a_{100}),\text{Med}(b_{1}, \ldots, b_{100}),\text{Med}(c_{1}, \ldots,c_{100}),\text{Med}(d_{1}, \ldots, d_{100})).$$
	This coincides with the definition in \cite{SinovaMedian}.
	The median $M$ is not necessarily one of the sample fuzzy sets; and in the particular case of Figure  \ref{figuracontinuasimplicial}, it is not.
	The maximizers of the depth functions $D_{nS}$, $D_{mS}$ and $D_{FS}$ provide alternative definitions of a median fuzzy set. They are in the vicinity of $M$ (represented in yellow in the figure) but they are not identical with $M$.
	
	The right column of Figure \ref{figuracontinuasimplicial} shows the trapezoidal fuzzy sets with the minimal 5 depth values for the three different proposals of simplicial depth. The trapezoidal fuzzy sets with minimal depth are the ones furthest to the left and right, as expected.
	It is observable from the plots that
	the three definitions order the sets with minimal depth in a similar way. The main  difference lies in that $D_{nS}$ gives a high number of ties (observe how many sets are colored in aquamarine blue in the last column of
	the first row). The reason for this is that
	$D_{nS}$   is a sum of indicator functions  \eqref{ecuacionDnS} %, $I_{i,j}^{A}$, which takes value  $1$ when $s_{A}(u,\alpha)$ is contained in the interval $$[\min\{s_{\mathcal{X}_{i}}(u,\alpha),s_{\mathcal{X}_{j}}(u,\alpha)\},\max\{s_{\mathcal{X}_{i}}(u,\alpha),s_{\mathcal{X}_{j}}(u,\alpha\}]$$ for every $(u,\alpha)\in\mathbb{S}^{0}\times [0,1]$.
	while the other two proposals make use the Lebesgue measure [\eqref{ecuacionLiju}, \eqref{definicionDmS} and \eqref{definicionDFS}]. Thus, it is generally more convenient to use the proposals $D_{mS}$ and $D_{FS}$ instead
	of $D_{nS};$ with results for $D_{nS}$ being inappropriate for some applications like classification. The use of a sum of indicator functions versus the Lebesgue measure also explains that
	$D_{nS}$ results in smaller depth values than $D_{mS}$ or $D_{FS}.$

	The main difference between $D_{mS}$ and $D_{FS},$ of a fuzzy set $A\in\mathcal F_c(\R^p),$ is that the first one takes the average  of $L^A_n(u)$ in \eqref{L} between  $u = -1$ and $u = 1$ and the second one its minimum over $u \in\{ 1,
	-1\}$.
	Thus, a fuzzy number $A$ with, for instance, $$L^A_n(1) \mbox{ close to } L^M_n(1) \mbox{ and } L^A_n(-1)\mbox{ far from }L^M_n(-1)$$
	does not take a maximal depth value with $D_{FS}$ but can take it with $D_{mS}.$ This is observed in the central column of Figure \ref{figuracontinuasimplicial}.
	
	A similar phenomenon is observed with the fuzzy numbers taking minimal depth values. The bottom row right column plot in Figure \ref{figuracontinuasimplicial} shows that there exists fuzzy numbers in the sample with minimal depth for
	$D_{FS},$ some are on the left side of the plot and the others on the right side. Among the ones on the left there are those that have, for instance,
	$$L^A_n(-1) \mbox{ far from } L^M_n(-1) \mbox{ while } L^A_n(1) \mbox{ is not as far from } L^{M}_n(1).$$
	Analogously, among the ones on  the right  there those that have, for instance, $$L^A_n(1) \mbox{ far from } L^M_n(1) \mbox{ while } L^A_n(-1)  \mbox{ is not as far from } L^{M}_n(-1).$$  As it observable from the central
	row right column plot in Figure  \ref{figuracontinuasimplicial}, these fuzzy numbers does not necessarily take minimal depth value with $D_{mS},$ as this depth function takes the average between $L^A_n(1)$ and $L^A_n(-1)$.
	
	\subsection{Real data}\label{Real}
	We use the \emph{Trees} dataset (from the {\tt SAFD}  R package for Statistical Analysis of Fuzzy Data), which was first used in \citep{Colubi}. This comes from a reforestation project in the region of Asturias (Northern
	Spain) by the INDUROT forest institute at the University of Oviedo. The project takes into account three species of trees: birch (\textit{Betula celtiberica}), sessile oak (\textit{Quercus petraea}) and rowan
	(\textit{Sorbus aucuparia}).
	
	The most important variable considered is the \textit{quality} of trees, whose observations are trapezoidal fuzzy sets coming from an expert subjective assessment of height, diameter, leaf structure and other features.  The dataset is represented in Figure \ref{figurasimplicial}, where quality is measured in the  x-axis in the range 1--5, from low to perfect quality. The membership values for each trapezoidal fuzzy set are  represented in the y-axis.
	
	The dataset is comprised of 9 different trapezoidal fuzzy values, represented in Figure \ref{figurasimplicial}. Therefore, the assumption in our theoretical study that each support functions has a continuous distribution is violated, which makes it interesting to check the depth functions' behavior.
	From left to right we denote them by $T_{1},\ldots , T_{9}.$ These sets appear in the sample with a certain multiplicity, resulting  in a sample $\mathcal{X}_1, \ldots, \mathcal{X}_n$ of size $n=279.$ Table \ref{tabladifusos} shows the absolute frequency of the fuzzy sets in the sample. We denote by
	$\mathfrak{X}$ the  fuzzy random variable corresponding to the empirical distribution associated to $\mathcal{X}_1, \ldots, \mathcal{X}_n.$ 
	\begin{table}[h]		
		\begin{center}					\begin{tabular}{ccccccccc}							\hline										$T_{1}$ & $T_{2}$ & $T_{3}$ & $T_{4}$ & $T_{5}$ & $T_{6}$ & $T_{7}$ & $T_{8}$ & $T_{9}$ \\ 							 
				\hline 			\\				22 & 16 & 39 & 36 & 85 & 22 & 35 & 12 & 12 \\											\hline						\end{tabular}				\end{center}			\caption{Number of sets
			in the sample for each type of trapezoidal fuzzy set. 	Absolute frequency of each distinct trapezoidal fuzzy set $T_i,$ $i=1, \ldots, 9,$ represented in  Figure \ref{figurasimplicial}.}		\label{tabladifusos}	
	\end{table}
	
	One can observe from Figure \ref{figurasimplicial} that
	\begin{equation}\label{orderSimplicial}
		s_{T_{i}}(1,\alpha)\geq s_{T_{j}}(1,\alpha)\mbox{ and }s_{T_{i}}(-1,\alpha)\leq s_{T_{j}}(-1,\alpha)
	\end{equation}
	for each $\alpha\in [0,1]$ and  $i,j\in \{1,\ldots 9\}$ with $i\leq j$. In fact the inequalities are strict except for the cases of $T_{4}, T_{5}$ and $T_{6}$, where
	\begin{equation}\label{tiesrealdata}
		s_{T_{4}}(-1,0) = s_{T_{5}}(-1,0)\mbox{ and }s_{T_{5}}(1,0) = s_{T_{6}}(1,0).
	\end{equation}
	Taking into account the sample version of $D_{nS}$ in \eqref{ecuacionDnS} and the fact that $I_{i,j}^{A}$ takes  value $1$ if  
	$$
	s_{A}(u,\alpha)\in [\min\{s_{\mathcal{X}_{i}}(u,\alpha),s_{\mathcal{X}_{j}}(u,\alpha)\},\max\{s_{\mathcal{X}_{i}}(u,\alpha),s_{\mathcal{X}_{j}}(u,\alpha)\}]
	$$ 
	for every $(u,\alpha)\in\mathbb{S}^{0}\times [0,1]$ and $0$ otherwise,
	the computation of $D_{nS}(T_{i};\mathfrak{X})$ reduces to computing  the simplicial depth in $\mathbb{R}$ of $s_{T_{i}}(u,\alpha)$ with respect to $s_{\mathfrak{X}}(u,\alpha)$ for some $(u,\alpha)$
	where the inequalities in \eqref{orderSimplicial} are strict. Taking into account \eqref{tiesrealdata}, this is the case of $(u,\alpha) = (1,1),$ for instance.
	Thus
	$$D_{nS}(T_{i};\mathfrak{X}) = SD(s_{T_{i}}(1,1);s_{\mathfrak{X}}(1,1))$$ for each $i\in\{1,\ldots ,9\}$.
	
	Taking into account the order given by \eqref{orderSimplicial} of $\{T_{i}\}_{i=1}^{9}$  for each $(u,\alpha)\in \mathbb{S}^{0}\times [0,1]$ and  $i\leq j$ with $i,j\in\{1,\ldots 9\},$
	we have that $L_{i,j,u}^{T_{k}} = 1$ for each $k\in[i, j]$ and $0$ otherwise. Considering the sample versions of $D_{mS}$ and $D_{FS}$ in  \eqref{definicionDmS} and \eqref{definicionDFS}, we have that in this case
	the three depth proposals coincide, that is,
	$$D_{nS}(T_{i};\mathfrak{X}) = D_{mS}(T_{i};\mathfrak{X}) = D_{FS}(T_{i};\mathfrak{X})$$ for each $i\in\{1,\ldots ,9\}$. Thus,  in computing the depth of an element in the dataset
	with respect to the empirical fuzzy random variable, we obtain the same depth value
	independently of which of the three simplicial based fuzzy depths is used. The left plot of  Figure \ref{figurasimplicial} represents in the color the depth values of each of the 9 distinct trapezoidal elements in the
	dataset. Colors range from brown (high depth) to yellow (low depth).
	
	\begin{figure}[htbp]
		\begin{center} 		\includegraphics[width=0.49\linewidth]{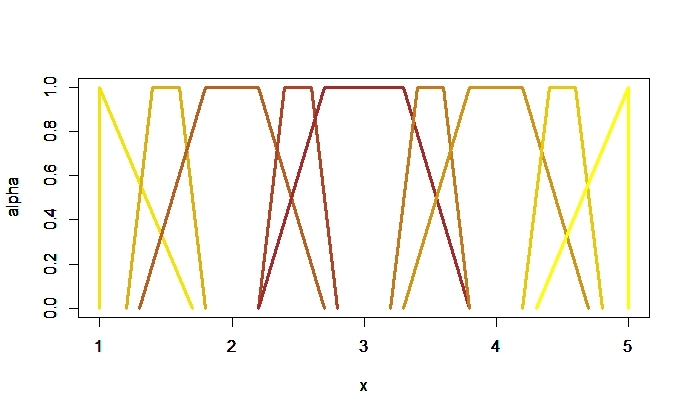} 	
			\includegraphics[width=0.49\linewidth]{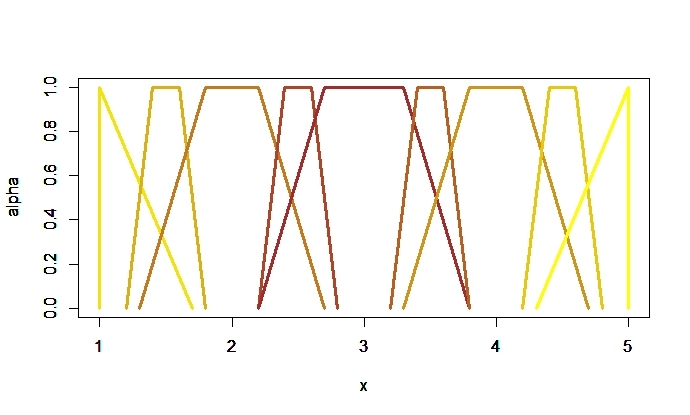} 	
		\end{center} 	\caption{ 		Display of the fuzzy sets in the \emph{Trees} dataset. 		In the first column, color is assigned based on the Simplicial depth of each fuzzy set in the empirical distribution. Colors range from brown (high depth) to yellow (low depth). 		The second column applies the same procedure but considering the Tukey depth.} 	\label{figurasimplicial}\end{figure}
	
	From Figure \ref{figurasimplicial} we can observe that the order induced in the dataset by  the simplicial and Tukey fuzzy depth functions is similar.  In fact, the only difference is $T_{3}$ and $T_{6}.$ In the case of the
	simplicial fuzzy depths, we have that $T_{3}$ is the third deepest set and $T_{6}$ is the fourth, while we obtain the reverse using the Tukey fuzzy depth. Let us explain where this difference comes from. If we observe Table
	\ref{tabladifusos} we have that $T_{3}$ has 39 repetitions in the sample while $T_{6}$ only has 22. On the other hand the diameter of 0-level and the 1-level of $T_{3}$ is greater than the diameter of the 0-level and the
	1-level of $T_{6}$. Thus, taking into account that the weight of $T_{3}$ in the sample is greater than the weight of $T_{6},$ as the Tukey depth is defined as a minimum, it could be an explanation of why
	$$D_{FT}(T_{6};\mathcal{X}) > D_{FT}(T_{3};\mathcal{X}).$$ 
	Meanwhile, as the simplicial depth for fuzzy sets is defined by an integral, it takes more into account the weights of the different sets of the sample and
	depreciates what happens in one single point.
	
	\section{Discussion}
	\label{discussion}
	
	Simplicial depth is one of the most widely used depth functions in  multivariate statistics. It is built over the notion of simplex in $\mathbb{R}^{p}$. In the space of fuzzy sets, the notion of simplex is not an obvious one. With the characterization introduced in Proposition \ref{proposicionsimplicieproyeccion} of simplices in the multivariate space, we justify the notion of simplex in $\mathcal{K}_{c}(\mathbb{R}^{p})$ and
	extend it to the fuzzy setting, working $\alpha$-level by $\alpha$-level (Definition \ref{pseudosimplex}).
	Making use of this notion, we propose a straightforward definition  of simplicial depth for the fuzzy setting and two  elaborate and sounded definitions.
	\begin{itemize}
		\item The naive simplicial fuzzy depth \eqref{DnS}, $D_{nS}$, is equivalent to the  multivariate simplicial depth. We prove some properties for it in Theorem \ref{SimplicialSDp} and  show it may result in a high number
		of ties at zero, which  is not desirable for instance in classification problems.
		\item The modified simplicial fuzzy depth (Definition \ref{DmS}),  $D_{mS},$  improves the  naive simplicial fuzzy depth analogously to how the modified band depth improves the band depth; resulting in less zero depth
		values.
		\item The simplicial fuzzy depth (Definition \ref{DFS}), $D_{FS}$, transforms the modified simplicial fuzzy depth in the direction %of the style
		of the Tukey depth; doing so by applying the  infimum over $\mathbb{S}^{p-1}$ instead of the expected value.
	\end{itemize}

	Although it is clear throughout the paper the authoritativeness  of $D_{mS}$ and  $D_{FS}$ over $D_{nS},$ there is not a clear winner between $D_{mS}$ and  $D_{FS}.$ The practical similarities and differences between them are
	discussed in Example \ref{Exfig} and Subsection \ref{Simu}. 
	Their properties are collected in Theorem \ref{teoremamSDp} and Proposition \ref{proposition1BMSDp}. For some of these properties it is required fuzzy random variables to satisfy certain type of continuity.
	This is inherited from the fact that the multivariate simplicial depth requires of continuous distributions to satisfy the notion of multivariate depth.

	Our three proposals neither satisfy the notion of semilinear nor of geometric depth function in \citep{primerarticulo} because of the lack of satisfaction of the entirety of the properties constituting these notions (Section
	\ref{properties}).
	However, as we can see in the illustrations
	in Section \ref{datasimulation}, the behavior of the three proposals is similar in practice. As shown there, it is also similar to that of the Tukey fuzzy depth, despite Tukey does satisfy both notions and the comparison
	is done with respect to a  fuzzy random variable that does not satisfy the  continuity properties required in Theorem \ref{teoremamSDp}.

	For future work, it is interesting to study more instances of fuzzy depth, creating a library of depth functions for the fuzzy setting. Also, we consider it is compelling to study more properties for the Tukey fuzzy depth and the
	simplicial fuzzy depths, such as convergence of the sample depth to the population depth (consistency) and their continuity or semicontinuity properties.
	
		\section{Proofs}\label{proofs}
	
	\begin{proof}[Proof of Proposition \ref{proposicionsimplicieproyeccion}]
		Let us denote 	$$\mathcal{C} := \{x\in\mathbb{R}^{p} : \langle u,x\rangle\in [m(u), M(u)] \text{ for all } u\in\mathbb{S}^{p-1}\}.$$
		
		First, we prove $S[x_{1},\ldots,x_{p+1}]\subseteq\mathcal{C}$. Let $x\in S[x_{1},\ldots,x_{p+1}]$. By \eqref{S}, there exists $\lambda_{1},\ldots,\lambda_{p+1}\geq 0$ with $\sum_{i = 1}^{p+1}\lambda_{i} = 1$ such that $x
		= \sum_{i = 1}^{p+1}\lambda_{i}x_{i}.$ For any fixed  direction $u\in\mathbb{S}^{p-1}$, we have 	$	\langle u,x\rangle = \sum_{i = 1}^{p+1}\lambda_{i}\langle u,x_{i}\rangle. 	$
		As $\lambda_{i}\in [0,1]$ for all $i= 1, \ldots, p+1,$ we have that $\langle u,x\rangle\in [m(u), M(u)];$ and, consequently,  $x\in \mathcal C$.
		
		Now, let $x\in \mathcal{C}$ and suppose for a contradiction that $x\not\in S[x_{1},\ldots,x_{p+1}]$. The simplex $S[x_{1},\ldots,x_{p+1}]$ and the set $\{x\}$ are closed, convex and bounded subsets of $\mathbb{R}^{p}$.
		By the Hyperplane Separation Theorem (see, e.g., \citep{separatinghyperplane}),  there exist $u\in\mathbb{R}^{p}$ and $b\in\mathbb{R}$ such that  $\langle u,x\rangle > b$ and  $\langle u,s\rangle < b$ for all $s\in
		S[x_{1},\ldots,x_{p+1}]$. This implies that $\langle u,x\rangle > \langle u,s\rangle$ for all $s\in S[x_{1},\ldots,x_{p+1}]$. Normalizing the vector $u$, $\bar{u}\in S^{p-1}$, we have that $\langle \bar{u},x\rangle >
		M(\bar{u}).$ This is a contradiction with the fact that $x\in \mathcal{C}$. Thus, $x\in S[x_{1},\ldots,x_{p+1}].$
	\end{proof}
	
	\begin{proof}[Proof of Proposition \ref{PropositionSimplices}]
		Let $A_{1},\ldots ,A_{p+1}\in\mathcal{K}_{c}(\mathbb{R}^{p})$ and $A\in\mathcal{K}_{c}(\mathbb{R}^{p})$ such that there exist real numbers $\lambda_{1},\ldots ,\lambda_{p+1}\geq 0$ with $\sum_{i=1}^{p+1}\lambda_{i} = 1$
		and $A = \sum_{i=1}^{p+1}\lambda_{i}\cdot A_{i}$. 	By \eqref{soportesuma}, 	$ 	s_{A}(u) = \sum_{i=1}^{p+1}\lambda_{i}\cdot s_{A_{i}}(u) 	$ 	for every $u\in\mathbb{S}^{p-1}$. Thus, for every $u\in\mathbb{S}^{p-1},$ 	
		$$ 	m(u) = (\sum_{i=1}^{p+1}\lambda_{i})m(u)\leq s_{A}(u)\leq (\sum_{i=1}^{p+1}\lambda_{i})M(u) = M(u). 	$$ 	 Then $A\in S_c[A_{1},\ldots , A_{p+1}]$.
	\end{proof}
	
	\begin{proof}[Proof of Proposition \ref{aaa}]
		For any $A\in\pfc(\R)$, since $\mathbb S^0=\{-1,1\}$ we have $$s_A(1,\alpha)=\sup A_\alpha, \quad s_A(-1,\alpha)=\sup\{-x\mid x\in A_\alpha\}=-\inf A_\alpha.$$ For any fixed $\alpha$, inequality $m(u\alpha)\le
		s_A(u,\alpha)\le M(u\alpha)$ will hold for $u=1$ if and only if $$\min\{\sup (A_1)_\alpha,\sup (A_2)_\alpha\}\le \sup A_\alpha \le \max\{\sup (A_1)_\alpha,\sup (A_2)_\alpha\}$$ which, taking into account the assumption
		$A_1\preceq A_2$, is equivalent to $$\sup (A_1)_\alpha\le \sup A_\alpha\le \sup (A_2)_\alpha.$$ In its turn, the inequality will hold for $u=-1$ if and only if $$\min\{-\inf (A_1)_\alpha,-\inf (A_2)_\alpha\}\le -\inf
		A_\alpha \le \max\{-\inf (A_1)_\alpha,-\inf (A_2)_\alpha\}$$ or, multiplying all terms by $-1$, $$\max\{\inf (A_1)_\alpha,\inf (A_2)_\alpha\}\ge \inf A_\alpha \ge \min\{\inf (A_1)_\alpha,\inf (A_2)_\alpha\}$$ which, again by
		the assumption $A_1\preceq A_2$, is the same thing as $$\inf (A_2)_\alpha \ge \inf A_\alpha \ge \inf (A_1)_\alpha.$$ The conjunction of those two conditions is just $A_1\preceq A\preceq A_2$. Hence
		$$S_F[A_1,A_2]=\{A\in\pfc(\R^d): A_1\preceq A\preceq A_2\}.$$
	\end{proof}
	
	\begin{proof}[Proof that the naive simplicial fuzzy depth is well defined]
		We need to show that the event $$\{s_A(u,\alpha)\in[m_{\mathcal{X}}(u,\alpha),M_{\mathcal{X}}(u,\alpha)]\mbox{ for all }(u,\alpha)\in\mathbb{S}^{p-1}\times[0,1]\}$$ $$=\bigcap_{u\in
			\mathbb{S}^{p-1}}\bigcap_{\alpha\in[0,1]}\{s_A(u,\alpha)\in[m_{\mathcal{X}}(u,\alpha),M_{\mathcal{X}}(u,\alpha)]\}$$ is measurable.
		
		First, for each fixed $u,\alpha$ and $i=1,\ldots,p-1$, the mapping $s_{\mathcal X_i}(u,\alpha)$ is a random variable  \cite[Lemma 4]{Kra}. Subsequently,
		$$\Omega_{u,\alpha}:=\{s_A(u,\alpha)\in[m_{\mathcal{X}}(u,\alpha),M_{\mathcal{X}}(u,\alpha)]\}$$ $$=\left(\bigcup_{i=1}^{p+1}\{s_{\mathcal X_i}(u,\alpha)\le s_A(u,\alpha)\}\right)\cap\left(\bigcup_{i=1}^{p+1}\{s_{\mathcal
			X_i}(u,\alpha)\ge s_A(u,\alpha)\}\right)$$ is measurable.
		
		Taking  $D$  a countable dense subset of $[0,1]$ such that $0\in D,$ let us prove
		\begin{equation}\label{fgh}
			\bigcap_{\alpha\in[0,1]}\Omega_{u,\alpha}=\bigcap_{\alpha\in D}\Omega_{u,\alpha}\quad \hbox{ for each fixed }u\in\mathbb S^{p-1}.
		\end{equation}
		The left-to-right inclusion is trivial. For the converse inclusion, assume for now that $\alpha\in(0,1]$. We can construct a sequence of elements of $D$ converging to $\alpha$ from the left (which is why $\alpha>0$ is
		needed). Indeed, for each $n\in\N$ with $n>\alpha^{-1}$ consider the open interval $(\alpha-n^{-1},\alpha)$. It contains some $\alpha_n\in D,$ because of $D$ being dense.
		Since $\alpha-n^{-1}<\alpha_n<\alpha$, we have $\alpha_n\to\alpha^-$. Now the mapping $s_A(u,\cdot)$ is left continuous \citep{MingSupport}. Similarly, for any arbitrary $\omega\in\Omega$, the $s_{\mathcal
			X_i(\omega)}(u,\cdot)$ are left continuous, whence $m(u,\cdot)$ and $M(u,\cdot)$ are too. For any $\omega\in\bigcap_{\alpha\in D}\Omega_{u,\alpha}$ we have
		$$m(u,\alpha_n)\le s_A(u,\alpha_n)\le M(u,\alpha_n)$$ (please note the unspecified dependence of $m$ and $M$ on $\omega$ via the $s_{\mathcal X_i}$). By the left continuity, also $$m_{\mathcal{X}}(u,\alpha)\le
		s_A(u,\alpha)\le M_{\mathcal{X}}(u,\alpha).$$ This means that $\omega$ is in $\Omega_{u,\alpha}$ for each $\alpha\in(0,1]$. The case $\alpha=0$ holds as well since we chose $D$ with $0\in D$. Accordingly, (\ref{fgh}) holds.
		That proves that each $\bigcap_{\alpha\in[0,1]}\Omega_{u,\alpha}$, being a countable intersection of measurable events, is measurable.
		
		$\mathbb S^{p-1}$, being a compact metric space, is separable. Let us take a countable dense subset $D'\subseteq \mathbb S^{p-1}$. The proof will be complete if we show $$\bigcap_{u\in
			\mathbb{S}^{p-1}}\bigcap_{\alpha\in[0,1]}\Omega_{u,\alpha}=\bigcap_{u\in D'}\bigcap_{\alpha\in[0,1]}\Omega_{u,\alpha},$$ since the left-hand side is the event we wish to prove measurable and the right-hand side is a
		countable intersection of measurable events. As before, only the right-to-left inclusion need be proved. Let us fix an arbitrary $u^*\in\mathbb S^{p-1}$. Due to $D'$ being dense, there exists a sequence $u_n\to u^*$ with
		$u_n\in D'$. Whenever $\omega\in \bigcap_{u\in D'}\bigcap_{\alpha\in[0,1]}\Omega_{u,\alpha}$, we have $$m(u_n,\alpha)\le s_A(u_n,\alpha)\le M(u_n,\alpha) \hbox{ for all }\alpha\in[0,1].$$ By the continuity of the support
		functions for fixed $\alpha$ \citep{MingSupport}, the convergence $u_n\to u^*$ implies $$m(u^*,\alpha)\le s_A(u^*,\alpha)\le M(u^*,\alpha) \hbox{ for all }\alpha\in[0,1].$$ That establishes $$\bigcap_{u\in
			D'}\bigcap_{\alpha\in[0,1]}\Omega_{u,\alpha}\subseteq \bigcap_{\alpha\in[0,1]}\Omega_{u^*,\alpha}.$$ By the arbitrariness of $u^*$, $$\bigcap_{u\in D'}\bigcap_{\alpha\in[0,1]}\Omega_{u,\alpha}\subseteq \bigcap_{u\in
			\mathbb{S}^{p-1}}\bigcap_{\alpha\in[0,1]}\Omega_{u,\alpha},$$ as wished. The proof is complete.
	\end{proof}
	
	\begin{proof}[Proof that the modified simplicial fuzzy depth is well defined]
		In order to show that both expressions defining $D_{mS}$ make sense and are equal, and justify the claim that Fubini's Theorem applies, we start by considering the following subset of the product measurable space $\Omega
		\times\mathbb S^{p-1}\times[0,1]$: $$Z:=\{(\omega,u,\alpha)\in \Omega\times\mathbb S^{p-1}\times[0,1]: \min_{1\le i\le p+1}s_{\mathcal X_i(\omega)}(u,\alpha)
		\le s_A(u,\alpha)\le \max_{1\le i\le p+1}s_{\mathcal X_i(\omega)}(u,\alpha)\}.$$
		Let us prove that $Z$ is measurable, i.e., it is in the product $\sigma$-algebra of $\Omega\times\mathbb S^{p-1}\times[0,1]$. Bear in mind that $Z$ is {\em not} the event
		$\bigcap_u\bigcap_\alpha\Omega_{u,\alpha}\subseteq\Omega$ from the previous proof.
		
		Given any fuzzy random variable $\mathcal X$, the support mapping $$\tilde s: (\omega,u,\alpha)\in \Omega\times\mathbb S^{p-1}\times[0,1]\mapsto s_{\mathcal X(\omega)}(u,\alpha)\in\R$$ is a random variable, by \cite[Lemma
		4]{Kra} or \cite[Proposition 4.6]{Joint}. Denote by $\tilde s_{\mathcal X_i}$ the support mapping of each $\mathcal X_i$. Also consider the support mapping $\tilde s_A$ of $A$ seen as a degenerate fuzzy random variable,
		namely $\tilde s_A(\omega,u,\alpha)=s_A(u,\alpha)$. Then $$Z=\left(\bigcup_{i=1}^{p+1}\{\tilde s_{\mathcal X_i}\le \tilde s_A\}\right)\cap\left(\bigcup_{i=1}^{p+1}\{\tilde s_{\mathcal X_i}\ge \tilde s_A\}\right),$$ which is
		a measurable event since the $\tilde s_{\mathcal X_i}$ and $\tilde s_A$ are all random variables. And, accordingly, its indicator function $I_Z:\Omega \times\mathbb S^{p-1}\times[0,1]\to \{0,1\}$ is measurable (and
		integrable against probability measures, since it is bounded).
		
		By the Fubini's Theorem, $$\int_{\Omega\times\mathbb S^{p-1}\times[0,1]}I_Z \dif(\mathbb P\otimes\mathcal V_p\otimes\nu) =\int_\Omega \int_{\mathbb S^{p-1}\times[0,1]}I_Z(\omega,u,\alpha) \dif(\mathcal
		V_p\otimes\nu)(u,\alpha)\dif\mathbb P(\omega)$$ $$=\int_{\mathbb S^{p-1}\times[0,1]}\int_\Omega I_Z(\omega,u,\alpha) \dif\mathbb P(\omega) \dif(\mathcal V_p\otimes\nu)(u,\alpha).$$ Now, for each $\omega\in\Omega$,
		$$\int_{\mathbb S^{p-1}\times[0,1]}I_Z(\omega,u,\alpha) \dif(\mathcal V_p\otimes\nu)(u,\alpha) =(\mathcal V_p\otimes\nu)(\{(u,\alpha)\mid I_Z(\omega,u,\alpha)=1\})$$ $$=(\mathcal V_p\otimes\nu)(\{(u,\alpha)\mid
		(\omega,u,\alpha)\in Z\}) =(\mathcal V_p\otimes\nu)(\{(u,\alpha)\mid m_{\mathcal{X}}(u,\alpha)\le s_A(u,\alpha)\le M_{\mathcal{X}}(u,\alpha)\})$$ whence the second term in the chain of identities is $$\int_\Omega
		\int_{\mathbb S^{p-1}\times[0,1]}I_Z(\omega,u,\alpha) \dif(\mathcal V_p\otimes\nu)(u,\alpha)\dif\mathbb P(\omega)$$ $$=E\left[(\mathcal V_p\otimes\nu)(\{(u,\alpha)\mid m_{\mathcal{X}}(u,\alpha)\le s_A(u,\alpha)\le
		M_{\mathcal{X}}(u,\alpha)\})\right].$$ Moreover, for each $(u,\alpha)$, $$\int_\Omega I_Z(\omega,u,\alpha) \dif\mathbb P(\omega)=\mathbb P(\{\omega\in\Omega\mid m_{\mathcal{X}}(u,\alpha)\le s_A(u,\alpha)\le
		M_{\mathcal{X}}(u,\alpha)\})$$ whence the third term in the chain of identities is, applying again the Fubini's Theorem
		\begin{equation}\label{ecuacionsimplicial1}
			\begin{aligned} &\int_{\mathbb S^{p-1}\times[0,1]}\int_\Omega I_Z(\omega,u,\alpha) \dif\mathbb P(\omega) \dif(\mathcal V_p\otimes\nu)(u,\alpha)\\ &=\int_{\mathbb S^{p-1}}\int_{[0,1]}\mathbb P(\{\omega\in\Omega\mid
				m_{\mathcal{X}}(u,\alpha)\le s_A(u,\alpha)\le M_{\mathcal{X}}(u,\alpha)\})\dif\nu(\alpha) \dif\mathcal{V}_{p}(u).
			\end{aligned}
		\end{equation}
		Those are the expressions for $D_{mS}(A;\mathcal X)$ in  \eqref{ecuacionmSDFp} and \eqref{ecuacionsimplicial1}, which are therefore well defined and indeed equivalent since both equal $\int_{\Omega\times\mathbb S^{p-1}
			\times[0,1]}I_Z \dif(\mathbb P\otimes\mathcal V_p\otimes\nu) $.
	\end{proof}
	
	\begin{proof}[Proof that the simplicial fuzzy depth is well defined]
		It is similar to the proof for the modified simplicial fuzzy depth, by fixing each individual $u\in \mathbb S^{p-1}$ and considering the measurable mapping $I_Z(\cdot,u,\cdot)$.
	\end{proof}
	
	\begin{proof}[Proof of Proposition \ref{Tecuacion2MSDp}]
		Define the events  $Q := \{m_{\mathcal{X}}(u,\alpha)\leq s_{U}(u,\alpha)\}$ and $R: = \{M_{\mathcal{X}}(u,\alpha)\geq s_{U}(u,\alpha)\}.$ Taking into account $$\mathbb{P}(Q^{c}\cap R^{c})\leq
		\mathbb{P}(m_{\mathcal{X}}(u,\alpha)> M_{\mathcal{X}}(u,\alpha))=0,$$ we obtain $$\mathbb{P}(Q\cap R) = 1- \mathbb P(Q^c\cup R^c)= 1 - \mathbb{P}(Q^{c}) - \mathbb{P}(R^{c}).$$ 	Besides, as
		$\mathcal{X}_{1},\ldots,\mathcal{X}_{p+1}$ are independent observations of $\mathcal{X}$, we have that $$s_{\mathcal{X}_{1}}(u_{0},\alpha_{0}),\ldots,s_{\mathcal{X}_{p+1}}(u_{0},\alpha_{0})$$ are independent random
		variables. Then $$\mathbb{P}(Q^{c})=\mathbb{P}(s_{\mathcal{X}_{1}}(u,\alpha) > s_{U}(u,\alpha))^{p+1}
		\mbox{ and }
		\mathbb{P}(R^{c})=\mathbb{P}(s_{\mathcal{X}_{1}}(u,\alpha) < s_{U}(u,\alpha))^{p+1}.$$
		All this together provides the result. In the particular case that $\mathcal{X}\in C^{0}[\mathcal{F}_{c}(\mathbb{R}^{p})],$	the random variable $s_{\mathcal{X}}(u,\alpha)$ is continuous, therefore
		$\mathbb{P}(s_{\mathcal{X}_{1}}(u,\alpha) = s_{U}(u,\alpha))=0.$
	\end{proof}
	
	\begin{proof}[Proof of Theorem \ref{teoremamSDp}]
		\ \\
		\emph{Property P1 for $D_{mS}$ and   $D_{FS}.$} 	Let $M\in\mathcal{M}_{p\times p}(\mathbb{R})$ be a non-singular matrix and $A,B\in\mathcal{F}_{c}(\mathbb{R}^{p}).$ Let us consider independent observations
		$\mathcal{X}_{1},\ldots,\mathcal{X}_{p+1}$ of $\mathcal{X}$ and denote, for any $u\in\mathbb{S}^{p-1}$ and $\alpha\in [0,1],$ 	\begin{align*}\bar{m}_{\mathcal{X}}(u,\alpha) := \min\{s_{M\cdot \mathcal{X}_{1} +
				B}(u,\alpha),\ldots,s_{M\cdot \mathcal{X}_{p+1} + B}(u,\alpha)\}\\ 		\bar{M}_{\mathcal{X}}(u,\alpha) := \max\{s_{M\cdot \mathcal{X}_{1} + B}(u,\alpha),\ldots,s_{M\cdot \mathcal{X}_{p+1} + B}(u,\alpha)\} 		
			.\end{align*} 	 From the properties of the minimum and  maximum, and  \eqref{soportesuma},
		\begin{align*}\bar{m}_{\mathcal{X}}(u,\alpha) = \min\{s_{M\cdot \mathcal{X}_{1}}(u,\alpha),\ldots,s_{M\cdot \mathcal{X}_{p+1}}(u,\alpha)\} + s_{B}(u,\alpha)\\
			\bar{M}_{\mathcal{X}}(u,\alpha) = \max\{s_{M\cdot \mathcal{X}_{1}}(u,\alpha),\ldots,s_{M\cdot \mathcal{X}_{p+1}}(u,\alpha)\} + s_{B}(u,\alpha)
			.\end{align*} 	 Making use of the function 	 \begin{eqnarray*}\label{g} 	 g : \mathbb{S}^{p-1}\rightarrow\mathbb{S}^{p-1} \mbox{ with } g(u) = (1/\|M^{T}u\|)M^{T}u 	 \end{eqnarray*} 	  and \eqref{pa}, we obtain 	
		\begin{align*} 	\bar{m}_{\mathcal{X}}(u,\alpha) = \|M^{T}\cdot u\|\cdot\min\{s_{\mathcal{X}_{1}}(g(u),\alpha),\ldots,s_{\mathcal{X}_{p+1}}(g(u),\alpha)\} + s_{B}(u,\alpha) 	\\ 	\bar{M}_{\mathcal{X}}(u,\alpha) =
			\|M^{T}\cdot u\|\cdot\max\{s_{\mathcal{X}_{1}}(g(u),\alpha),\ldots,s_{\mathcal{X}_{p+1}}(g(u),\alpha)\} + s_{B}(u,\alpha) 	.\end{align*} Similarly, $s_{M\cdot A+B}(u,\alpha)=\|M^T\cdot u\|\cdot s_A(g(u),\alpha)$.
		Consequently, as $g$ is a bijective map, 	\begin{equation} 		\begin{aligned}\nonumber 			&\{(u,\alpha)\in\mathbb{S}^{p-1}\times [0,1] : s_{A}(u,\alpha)\in [m_{\mathcal{X}}(u,\alpha),M_{\mathcal{X}}(u,\alpha)]
				\} = \\ \nonumber 			&\{(u,\alpha)\in\mathbb{S}^{p-1}\times [0,1] : s_{M\cdot A+B}(u,\alpha)\in [\bar{m}_{\mathcal{X}}(u,\alpha),\bar{M}_{\mathcal{X}}(u,\alpha)]\}. 		\end{aligned} 	\end{equation} 	Thus,
		$D_{mS}(A;\mathcal{X}) = D_{mS}(M\cdot A + B;M\cdot\mathcal{X} + B).$
		
		The proof for $D_{FS}$ is analogous. 	\vspace{.4cm}
		
		\emph{Property P2 for $D_{mS}$ and $D_{FS}.$} 	Let $\mathcal{X}\in C^{0}[\mathcal{F}_{c}(\mathbb{R}^{p})]$ be $F$-symmetric with respect to some fuzzy set  $A\in\mathcal{F}_{c}(\mathbb{R}^{p}).$ 		We begin by
		maximizing the integrand in \eqref{ecuacionmSDFp}, which, by Proposition \ref{Tecuacion2MSDp} for $\mathcal{X}\in C^{0}[\mathcal{F}_{c}(\mathbb{R}^{p})],$ is $ 1 - [1 - F_{u,\alpha}(s_{U}(u,\alpha))]^{p+1} -
		[F_{u,\alpha}(s_{U}(u,\alpha))]^{p+1}.$ This is equivalent to minimizing 	\begin{equation}\label{expresionMSDpminimo} 		[1 - F_{u,\alpha}(s_{U}(u,\alpha))]^{p+1} + [F_{u,\alpha}(s_{U}(u,\alpha)]^{p+1}. 	
		\end{equation}
		
		Considering the function 	\begin{eqnarray}\label{f}f : [0,1]\rightarrow\mathbb{R} \mbox{ with  }f(x) = (1-x)^{p+1} + x^{p+1}, 	\end{eqnarray} with derivative $f'(x)=(p+1)(x^p-(1-x)^p)$, the expression in
		\eqref{expresionMSDpminimo} is the composition of $F_{u,\alpha}$ and $f$. 		The function $F_{u,\alpha}$ is non-decreasing and $f$ is strictly decreasing in $[0,1/2]$ and strictly increasing in $[1/2,1],$ with  a minimum
		at $1/2.$ Thus   \eqref{expresionMSDpminimo} is minimized at any $t\in\mathbb{R}$ such that $F_{u,\alpha}(t) = 1/2$ for all $u\in\mathbb{S}^{p-1}$ and $\alpha\in [0,1]$.	
		By \eqref{Amedian} and the assumption that $\mathcal X\in C^0[\pfc(\R^d)]$, it follows that $s_{A}(u,\alpha)$ is one such $t$ 	for each $u\in\mathbb{S}^{p-1}$ and $\alpha\in [0,1]$.
		
		Since $A$ maximizes the integrand in \eqref{ecuacionmSDFp} and \eqref{ecuacionMSDFp} for each $(u,\alpha)$, clearly $A$ maximizes both $D_{mS}(\cdot,\mathcal X)$ and $D_{FS}(\cdot,\mathcal X)$.
		\vspace{.4cm}

		\emph{Property P3a  for $D_{mS}.$} 	Let $B\in\mathcal{F}_{c}(\mathbb{R}^{p})$ and $\lambda\in [0,1]$. 	It suffices to prove that $D_{mS}((1-\lambda)A + \lambda B;\mathcal{X}) - D_{mS}(B;\mathcal{X})\geq 0$. 	Recall
		that $\mathcal{X}\in C^0[\pfc(\R^d)]$ is $F$-symmetric with respect to $A$. Thus each $s_{\mathcal{X}}(u,\alpha)$ is a continuous random variable which is centrally symmetric with respect to $s_{A}(u,\alpha)$ and
		$F_{u,\alpha}(s_{A}(u,\alpha)) = 1/2$. %for every $u\in\mathbb{S}^{p-1}$ and $\alpha\in[0,1]$. 	
		Set 	\begin{equation}\label{x} 	x_{u,\alpha}^{\lambda} : = (1-\lambda)s_{A}(u,\alpha) + \lambda s_{B}(u,\alpha). 	
		\end{equation} 	By \eqref{ecuacionmSDFp}, Proposition \ref{Tecuacion2MSDp} and the linearity of the support function, 	\begin{equation}\label{ecuacionSimplicial} 		\begin{aligned} 			D_{mS}((1-\lambda)\cdot A +
				\lambda\cdot B;\mathcal{X}) - D_{mS}(B;\mathcal{X}) = \int_{\mathbb{S}^{p-1}}\int_{[0,1]}\{[1 - F_{u,\alpha}(s_{B}(u,\alpha))]^{p+1} \\ 			+ [F_{u,\alpha}(s_{B}(u,\alpha))]^{p+1} -[1 -
				F_{u,\alpha}(x_{u,\alpha}^{\lambda})]^{p+1} - [F_{u,\alpha}(x_{u,\alpha}^{\lambda})]^{p+1} \}\dif\nu(\alpha) \dif\mathcal{V}_{p}(u). 		\end{aligned} 	\end{equation} 	 Let us consider the function $f :
		[0,1]\rightarrow\mathbb{R}$ with $f(x) = (1-x)^{p+1} + x^{p+1}$. 	Now if $s_{B}(u,\alpha)\leq s_{A}(u,\alpha)$, we have $s_{B}(u,\alpha)\leq x_{u,\alpha}^{\lambda}$ and $$F_{u,\alpha}(s_{B}(u,\alpha))\leq
		F_{u,\alpha}(x_{u,\alpha}^{\lambda})\leq 1/2.$$ Considering $f$ as in \eqref{f}, since it is decreasing in $[0,1/2]$  we have  $f(F_{u,\alpha}(s_{B}(u,\alpha)))\geq f(F_{u,\alpha}(x_{u,\alpha}^{\lambda}))$. That implies that
		the integrand in \eqref{ecuacionSimplicial} is non-negative. The same conclusion is reached in the case $s_{B}(u,\alpha)\geq s_{A}(u,\alpha)$, using the fact that $f$ is increasing in $[1/2,1]$. 	Thus $$D_{mS}((1-\lambda)A +
		\lambda B;\mathcal{X})- D_{mS}(B;\mathcal{X})\ge 0.$$ 	\vspace{.4cm}
		
		\emph{Property P3a for $D_{FS}.$} Let $B\in\mathcal{F}_{c}(\mathbb{R}^{p})$ and $\lambda\in [0,1]$. By hypothesis, $\mathcal{X}\in C^{0}[\mathcal{F}_{c}(\mathbb{R}^{p})]$ is   $F$-symmetric, with
		respect to $A$. 	
		Using  \eqref{ecuacionMSDFp} and
		$x_{u,\alpha}^\lambda$ as in \eqref{x}, we have that 	\begin{equation} 		\begin{aligned}\nonumber 			&D_{FS}((1-\lambda)\cdot A + \lambda\cdot B) - D_{FS}(B;\mathcal{X}) = \\ \nonumber 			
				&\inf_{u\in\mathbb{S}^{p-1}}\int_{[0,1]}1 - (1 - F_{u,\alpha}(x_{u,\alpha}^{\lambda}))^{p+1} - F_{u,\alpha}(x_{u,\alpha}^{\lambda})^{p+1} \dif\nu(\alpha) -\\\nonumber 			&\inf_{u\in\mathbb{S}^{p-1}}\int_{[0,1]}1 - (1 -
				F_{u,\alpha}(s_{B}(u,\alpha)))^{p+1} - F_{u,\alpha}(s_{B}(u,\alpha))^{p+1} \dif\nu(\alpha). 		\end{aligned} 	\end{equation} Following the arguments in the proof of Property P3a for $D_{mS}$, 	\begin{equation} 	
			\begin{aligned}\nonumber 		&\int_{[0,1]}1 - (1 - F_{u,\alpha}(x_{u,\alpha}^{\lambda}))^{p+1} - F_{u,\alpha}(x_{u,\alpha}^{\lambda})^{p+1} \dif\nu(\alpha)\geq\\\nonumber 		&\int_{[0,1]}1 - (1 -
				F_{u,\alpha}(s_{B}(u,\alpha)))^{p+1} - F_{u,\alpha}(s_{B}(u,\alpha))^{p+1} \dif\nu(\alpha) 	\end{aligned}
		\end{equation}
		for each $u\in\mathbb{S}^{p-1}$. The inequality is preserved if we take the infimum on both sides. Thus $D_{FS}((1-\lambda)\cdot A + \lambda\cdot B; \mathcal{X})\geq D_{FS}(B;\mathcal{X})$. 	\vspace{.4cm}
		
		\emph{Property P3b for $D_{mS}$ and   $D_{FS}.$} In \cite[Theorem 5.4]{primerarticulo}, it is proved that P3b is equivalent to P3a for any $\rho_{r}$ metric with $r\in (1,\infty)$.
	\end{proof}

	\begin{proof}[Proof of Proposition \ref{proposition1BMSDp}]
		Let $A,B\in\mathcal{F}_{c}(\mathbb{R}^{p})$ be two fuzzy sets such that $A$ maximizes $D_{FS}(\cdot ;\mathcal{X})$. 	 Any $C_u$ defined 	 as appears in the definition of $\mathfrak{B}$ satisfies 	
		$C_u\subseteq[0,1]$ and $\nu(C_u)=1.$ Thus,
		\begin{equation*}
			D_{FS}(A+n\cdot B;\mathcal{X}) = \inf_{u\in\mathbb{S}^{p-1}} \int_{C_{u}} \mathbb{P}(s_{A+n\cdot B}(u,\alpha)\in [m_{\mathcal{X}}(u,\alpha), M_{\mathcal{X}}(u,\alpha)]) \dif\nu(\alpha)
		\end{equation*}
		and, fixing an arbitrary  $u\in\mathbb{S}^{p-1}$,
		\begin{equation*}
			D_{FS}(A+n\cdot B;\mathcal{X})\leq \int_{C_{u}} \mathbb{P}(s_{A+n\cdot B}(u,\alpha)\in [m_{\mathcal{X}}(u,\alpha), M_{\mathcal{X}}(u,\alpha)]) \dif\nu(\alpha).
		\end{equation*}
		Using the Dominated Convergence Theorem, we obtain
		\begin{equation}\label{ecuacion2MSD4}
			\lim_{n\rightarrow\infty} 	D_{FS}(A+n\cdot B;\mathcal{X})\leq\int_{C_{u}}\lim_{n\rightarrow\infty}   \mathbb{P}(s_{A+n\cdot B}(u,\alpha)\in [m_{\mathcal{X}}(u,\alpha), M_{\mathcal{X}}(u,\alpha)])\dif\nu(\alpha).
		\end{equation}
		Making use of Proposition \ref{Tecuacion2MSDp} and  \eqref{soportesuma},
		\begin{equation}\label{inT}
			\begin{aligned} 		&\mathbb{P}(s_{A+n\cdot B}(u,\alpha)\in [m_{\mathcal{X}}(u,\alpha), M_{\mathcal{X}}(u,\alpha)]) =  \\ 		&1 - [1 - F_{u,\alpha}(s_{A}(u,\alpha) + n\cdot s_{B}(u,\alpha))]^{p+1} -
				[F_{u,\alpha}(s_{A}(u,\alpha) + n\cdot s_{B}(u,\alpha))]^{p+1}. 	\end{aligned}
		\end{equation}
		As $F_{u,\alpha}$ is the distribution function of the real random variable $s_{\mathcal{X}}(u,\alpha),$ we get, for each $\alpha\in C_{u},$ that the $
		\lim_{n\rightarrow\infty} F_{u,\alpha}(s_{A}(u,\alpha) + n\cdot s_{B}(u,\alpha))
		$ is 1 if $s_{B}(u,\alpha) > 0$ and 0
		if $s_{B}(u,\alpha) < 0.$ Since $B\in\mathfrak{B},$ we have  $s_{B}(u,\alpha)\neq 0$ for all $\alpha\in C_{u}$.
		Making use of this in \eqref{inT}, whether $s_B(u,\alpha)$ is larger or smaller than $0$ we get $$
		\lim_{n\rightarrow\infty} \mathbb{P}(s_{A+n\cdot B}(u,\alpha)\in [m_{\mathcal{X}}(u,\alpha), M_{\mathcal{X}}(u,\alpha)]) = 0,
		$$ for every $\alpha\in C_{u},$ which implies, by \eqref{ecuacion2MSD4}, that $\lim_{n} D_{FS}(A+n\cdot B;\mathcal{X}) = 0$.
		
		The proof for $D_{mS}$ is analogous.
	\end{proof}

	\begin{proof}[Proof of Theorem \ref{SimplicialSDp}]
		\ \\
		\emph{Property P1.} The proof is  analogous to that of P1 in Theorem \ref{teoremamSDp}.
		\vspace{.4cm}
		
		\emph{Property P4b.} Let $\mathfrak{d} := \{d_{r} : r\in [1,\infty]\}\cup \{\rho_{r}: r\in [1,\infty)\}$ be the set of fuzzy metrics of type $d_{r}$ and $\rho_{r}$. Let us fix $d\in\mathfrak{d}.$ Denoting by $A$ a fuzzy
		set that maximizes $D_{nS}(\cdot;\mathcal{X}),$  let $\{A_{n}\}_{n}$ be a sequence of fuzzy sets such that $\lim_{n} d(A,A_{n}) = \infty.$ As $d\in\mathfrak{d},$ this implies, see \cite[Proposition 8.3.]{primerarticulo},
		that there exists  $u_{0}\in\mathbb{S}^{p-1}$ and $\alpha_{0}\in [0,1]$ such that 	\begin{equation}\label{sa} 	\lim_{n} |s_{A_{n}}(u_{0},\alpha_{0})| = \infty. 	\end{equation} 	
		By \eqref{expressionSDF1}, 	\begin{equation*}		
			D_{nS}(A_{n};\mathcal{X}) \leq 			\mathbb{P}(s_{A_{n}}(u_{0},\alpha_{0})\in
			[m_{\mathcal{X}}(u_{0},\alpha_{0}), M_{\mathcal{X}}(u_{0},\alpha_{0})]), 	\end{equation*} 	which, by Proposition \ref{Tecuacion2MSDp}, results in 	\begin{equation*} 	\begin{aligned} 	D_{nS}(A_{n};\mathcal{X}) \leq & 	
				1 - [1 - F_{u_{0},\alpha_{0}}(s_{A_{n}}(u_{0},\alpha_{0}))]^{p+1} \\ &- [F_{u_{0},\alpha_{0}}(s_{A_{n}}(u_{0},\alpha_{0})) - \mathbb{P}(s_{\mathcal{X}_{1}}(u_{0},\alpha_{0}) = 			s_{A_{n}}(u_{0},\alpha_{0}))]^{p+1}. 		 
		\end{aligned} 	\end{equation*} 	Taking limits in this expression, and making use of  \eqref{sa} and the properties of the cumulative distribution function, we obtain 	 $\lim_{n} D_{nS}(A_{n};\mathcal{X}) = 0$. 	
		\vspace{.4cm}
		
		\emph{Property P4a.} According to \cite[Proposition 5.8]{primerarticulo}, P4b implies P4a for the metrics  $d_r$ and $\rho_r$ for any $r\in[1,\infty).$
	\end{proof}
	
	{\bf Acknowledgments} A. Nieto-Reyes and L.Gonzalez were supported by grant MTM2017-86061-C2-2-P funded by  MCIN/AEI/ 10.13039/501100011033 and ``ERDF A way of making Europe". P. Ter\'an is supported by the Ministerio de Ciencia, Innovaci\'on y Universidades grant PID2019-104486GB-I00 and the Consejer\'\i a de Empleo, Industria y Turismo del Principado de Asturias grant
	GRUPIN-IDI2018-000132.

\end{document}